\documentclass[12pt,a4wide]{article}%

\textwidth 16.5cm
\textheight 625pt
\parskip 0.25cm
\hoffset -1.3cm
\voffset -1.3cm

\usepackage{graphicx}
\usepackage{subfigure}
\usepackage{amsthm}
\usepackage{amsmath}
\usepackage{amsfonts}
\usepackage{amssymb}

\newcommand{\nn}{\nonumber}

\renewcommand{\Pr}{\mathbb{P}}

\newcommand{\Pc}[2]{\Pr \Big[#1\Big| #2\Big]}
\newcommand{\MP}[2]{\text{MP}(#1, #2)}
\renewcommand{\P}{\mathbb{P}}

\newcommand{\codi}{\Rightarrow}

\newcommand{\F}{\mathcal{F}}
\newcommand{\E}{\operatorname{E}}
\renewcommand{\P}{\mathbb{P}}
\newtheorem{theo}{Theorem}[section]
\newtheorem{lm}{Lemma}[section]
\newtheorem{df}{Definition}[section]

\newtheorem{rmk}{Remark}[section]

\newtheorem{prop}{Proposition}[section]
\newtheorem{example}{Example}[section]

\numberwithin{equation}{section}


\begin{document}

\begin{center}

{ \large \bf Growth of uniform infinite causal triangulations }

\vspace{30pt}

{\sl V.~Sisko}$\,^{a}$, {\sl A.~Yambartsev}$\,^{b}$ and
{\sl S.~Zohren}$\,^{c,d,b}$

\vspace{24pt}

{\footnotesize

$^a$~Department of Statistics, Universidade Federal Fluminense, Brazil

\vspace{10pt}

$^b$~Department of Statistics, University of S\~ao Paulo, Brazil

\vspace{10pt}

$^c$~Department of Physics, Pontifícia Universidade Cat\'olica do Rio de Janeiro, Brazil

\vspace{10pt}

$^d$~Rudolf Peierls Centre for Theoretical Physics, Oxford University, UK
}
\vspace{48pt}

\end{center}

\begin{abstract}
We introduce a growth process which samples sections of uniform infinite causal triangulations by elementary moves in which a single triangle is added. A relation to a random walk on the integer half line is shown. This relation is used to estimate the geodesic distance of a given triangle to the rooted boundary in terms of the time of the growth process and to determine from this the fractal dimension. Furthermore, convergence of the boundary process to a diffusion process is shown leading to an interesting duality relation between the growth process and a corresponding branching process. \\ \\
\textbf{2000 MSC.} 60F05, 60J60, 60J80.\\
\textbf{Keywords.} Causal triangulations, growth process, weak convergence, diffusion process, scaling limits, branching process.
%
%
%
%
%
%
%
\end{abstract}
\newpage

\tableofcontents

\newpage

\section{Introduction}

In the field of quantum gravity, models of random geometry or sometimes called quantum geometry have been studied intensively in the search of a non-perturbative definition of the gravitational path integral (see \cite{Ambjorn:1997di} for an overview). In two dimensions one distinguishes between models of Euclidean quantum gravity, so-called dynamical triangulations (DT) \cite{Ambjorn:1997di} and Lorentzian quantum gravity, so-called causal dynamical triangulations (CDT) (\cite{Ambjorn:1998xu}, and \cite{Ambjorn:2009rv} for an overview of recent progress in two dimensions). 

In the area of probability theory, the uniform measure on infinite planar triangulations (UIPT) \cite{Angel:2002ta} has been introduced as a mathematically rigorous model of DT,\footnote{Lately also much progress has been made in understanding the scaling limit of DT as the Brownian map (see \cite{Clay-Le-Gall} for an overview).} while recently uniform infinite causal triangulations (UICT) \cite{Durhuus:2009sm,Sisko2011} have been employed as a mathematically rigorous model of CDT. In particular, the formulation of the UICT measure is based on a bijection to planar rooted trees as was first formulated in \cite{MYZ2001} and independently later in \cite{Durhuus:2009sm} (see Figure \ref{f0}). Both formulations are based on a similar bijection for the dual graphs of CDT which was introduced in \cite{DiFrancesco:1999em}. The work in \cite{Durhuus:2009sm} shows convergence of the uniform measure on causal triangulations in the limit where the number of triangles goes to infinity and proves that the fractal dimension is two almost surely (a.s.) as well as that the spectral dimension is bounded above by two a.s. In \cite{Sisko2011} further convergence properties of the UICT measure are proven, in particular, using the relation to a size-biased critical Galton-Watson process, the convergence of the joint boundary length-area process to a diffusion process is shown from which one can extract the quantum Hamiltonian through the standard Feynman-Kac procedure. In a different work \cite{anatoli}, the existence of a phase transition of the quenched Ising model coupled to UICT is shown. All the above mentioned articles rely on the bijection to trees and the relation to branching processes. In this article we give an alternative formulation of UICT through a growth process.

In \cite{Angel2003} Angel studied a growth process which samples sections of UIPT. This growth process is a mathematically rigorous formulation of the so-called peeling procedure for DT, as introduced by Watabiki in the physics literature \cite{Watabiki:1993ym}, where it can also be understood as a time-dependent version of the so-called loop equation, a combinatorial equation derived from random matrix models of DT \cite{Ambjorn:1997di}.

In the context of CDT a similar peeling procedure as for DT can be formulated as was shown recently \cite{CDTmatrix2}. Furthermore, one can relate it to a random matrix model which itself can be understood as a new continuum limit of the standard matrix model for DT \cite{Ambjorn:2008gk,Ambjorn:2008jf}.

In this article we introduce a growth process which samples sections of UICT by elementary moves in which a single triangle is added (see Figure \ref{fig1} and \ref{f2}). This growth process is based on the peeling procedure of CDT \cite{CDTmatrix2} and analogous to the corresponding growth process for UIPT \cite{Angel2003}. The growth process is related to a Markov chain $\{M_n\}_{n\geq0}$ with state space $\mathbb N=\{1,2,3,...\}$ which describes the evolution of the boundary length $M_n$ of the triangulation as a function of ``growth time'' $n$. Using this relation it is shown how to estimate the stoping times $n_t$ at which the growth process finishes a strip of a fixed geodesic distance $t$ to the rooted boundary. We use this to prove that the fractal dimension is almost surely two, in an alternative manner to the derivation using branching processes as was done in \cite{Durhuus:2009sm}. Further, we prove convergence of the Markov chain $\{M_n\}_{n\geq0.}$ to a diffusion process. It is then shown how to relate this diffusion process using a random time change to another diffusion process describing the evolution of the generation size of a critical Galton-Watson conditioned on non-extinction. This provides us with an interesting duality picture with the growth process on the one side and the branching process on the other side.

\section{A growth process for uniform infinite causal triangulations}\label{growthSec}

\subsection{Definitions}

We consider rooted causal triangulations of a cylinder $\mathcal C = S^1\times [1, \infty)$, where $S^1$ is the unit circle.

Consider a connected graph $G$ with a countable number of vertices embedded in $\mathcal C$. Suppose that all its faces
are triangles (using the convention that an edge incident to the same face on both sides counts twice, see \cite{Sisko2011} for more details). A triangulation $T$ of $\mathcal C$ is the pair of the embedded graph $G$ and the set $F$ of all the faces: $T = (G, F)$.

\begin{df} \label{defact}
A triangulation $T$ of $\mathcal C$ is called an {\it almost causal triangulation} (ACT) if the following conditions
hold:
\begin{itemize}
\item each triangular face of $T$ belongs to some strip $S^1 \times [j, j + 1], j =
1, 2, \dots,$ and has all vertices on the boundary
$(S^1\times \{j\}) \cup (S^1\times \{j+1\})$ of the strip $S^1\times [j, j + 1]$;

\item let $k_j = k_j(T)$ be the number of edges on $S^1\times \{j\}$, then we have
$0 < k_j < \infty$ for all $j = 1, 2, \dots$.
\end{itemize}
\end{df}

\begin{df}
A triangulation $T$ of $\mathcal C$ is called a {\it causal triangulation} (CT) if it is an almost causal (ACT) and any triangle has exactly one edge on the boundary of the strip to which it belongs.
\end{df}

\begin{example}
The first two sclices from the bottom of the triangulation of Figure \ref{f3} form a CT while the third strip is an example of an ACT.
\end{example}

\begin{df}
A triangulation $T$ of $\mathcal C$ is called {\it rooted} if it has a root. The
root in the triangulation $T$ consists of a triangular face $t$ of $T$, called the {\it root
triangle}, with an ordering on its vertices $(x, y, z)$. The vertex $x$ is the {\it root
vertex} and the directed edge $(x, y)$ is the {\it root edge}. The root vertex and
the root edge belong to $S^1\times \{1\}$.
\end{df}

\begin{df}
Two almost causal or two causal rooted triangulations of $\mathcal C$, say $T = (G, F)$ and
$T^\prime = (G^\prime, F^\prime)$, are equivalent if there exists a self-homeomorphism of $\mathcal C$ such
that it transforms each slice $S^1\times \{j\}, j = 1,\dots, M$ to itself preserving
its direction, it induces an isomorphism of the graphs $G$ and $G^\prime$ and a bijection
between $F$ and $F^\prime$, also the root of $T$ goes to the root of $T^\prime$.
\end{df}

We usually abbreviate ``equivalence class of embedded rooted (almost) causal triangulations"  by ``(almost) causal triangulations". In the same way we can define an (almost) causal triangulations of a cylinder $C_t = S^1\times [1,t]$, where $t=2,3, \dots$.

\subsection{Uniform infinite causal triangulations}

Denote by ${\mathcal{T}}(N,m_{0},m)$ the set of finite causal triangulation with $N$ triangles with a rooted boundary of length $m_0$ and second boundary of length $m$.
Let ${\mathcal{T}}(N,m_{0})$ be the set of finite causal triangulations with $N$
triangles, with the length of the rooted boundary equal $m_{0}$ and the
length of the other boundary not fixed, i.e.
\[
{\mathcal{T}}(N,m_{0})=\cup_{m=1}^{\infty}
{\mathcal{T}}(N,m_{0},m).
\]
Let $C(N,m_{0})=\#{\mathcal{T}}(N,m_{0})$ be the number of
triangulations of a cylinder with $N$ triangles and $m_{0}$
boundary edges of the rooted boundary. Define the uniform distribution on the (finite) set
${\mathcal{T}}(N,m_{0})$ by
\begin{equation}
\P_{N,m_{0}}(T)=\frac{1}{C(N,m_{0})}. \label{mera}%
\end{equation}

One can now define the limiting measure as the uniform measure on infinite causal triangulations, as done in \cite[Theorem 2]{Durhuus:2009sm} which is based on the generic random tree measure \cite[Theorem 2]{Durhuus:2006vk} (see also \cite{Sisko2011,Aldous1998}):

\begin{theo}
\bigskip\label{thconv}There exists the measure $\pi$, called the uniform infinite causal triangulation (UICT) measure, on the set of causal triangulations of the cylinder $\mathcal C$ such that
\[
\P_{N,m_{0}}\rightarrow\pi_{m_o},\quad
N\rightarrow\infty
\]
as a weak limit.
\end{theo}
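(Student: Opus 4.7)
The plan is to reduce the statement to the convergence of uniform planar trees via the bijection between causal triangulations and planar forests mentioned in the introduction. Concretely, a rooted causal triangulation of the cylinder $\mathcal C_t = S^1 \times [1,t]$ with rooted boundary of length $m_0$ is in bijection with an ordered forest consisting of $m_0$ planar rooted trees of height at most $t-1$, where the triangles of $T$ correspond bijectively to the non-root vertices (or equivalently edges) of the forest according to the rule that each triangle pointing ``upward'' in the strip between heights $j$ and $j+1$ is associated with a vertex in generation $j$. Under this correspondence, the uniform measure $\P_{N,m_0}$ on ${\mathcal T}(N,m_0)$ pushes forward to the uniform measure on the set of ordered forests with $m_0$ trees and exactly $N$ non-root vertices.

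First I would verify that the bijection is continuous in the local topology: two causal triangulations are close iff they agree on a large ball around the root triangle, and two forests are close iff they agree on the trees obtained by truncating at a large generation. Because adding a triangle in the strip at height $j$ only modifies the forest by adding one vertex in generation $j$, this verification is essentially combinatorial. Next I would apply the convergence theorem for uniform planar trees: by Durhuus--Jonsson--Wheater \cite{Durhuus:2006vk}, the uniform measure on planar rooted trees with $N$ vertices converges weakly, as $N\to\infty$, to the generic random tree measure, which is the law of a size-biased critical Galton--Watson tree with geometric offspring distribution. The $m_0$-forest version follows by an analogous argument (or by conditioning on the size of each component), yielding a limiting measure $\tilde\pi_{m_0}$ on infinite forests.

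The UICT measure $\pi_{m_0}$ is then defined as the push-forward of $\tilde\pi_{m_0}$ under the inverse bijection, and the weak convergence $\P_{N,m_0}\to\pi_{m_0}$ is inherited directly from the weak convergence on the tree side, since continuous cylinder functions on triangulations pull back to continuous cylinder functions on forests. Finally, one checks that $\pi_{m_0}$ is concentrated on causal triangulations of the infinite cylinder $\mathcal C = S^1\times[1,\infty)$ satisfying $0<k_j<\infty$ for all $j$, which corresponds on the forest side to each generation being finite and non-empty almost surely, a standard property of the size-biased critical Galton--Watson tree conditioned on non-extinction.

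The main obstacle I expect is not the convergence step itself, since it is essentially a transcription of \cite{Durhuus:2006vk}, but rather the careful setup of the topology on equivalence classes of embedded triangulations and the verification that the tree--triangulation bijection is a homeomorphism in the local (Benjamini--Schramm type) topology. The fact that causal triangulations must be restricted to \emph{causal} rather than \emph{almost causal} ones under the limiting measure also requires a small argument: one must check that the uniform measure on ${\mathcal T}(N,m_0)$ puts vanishing weight on any degenerate configuration, which again reduces on the tree side to the observation that the size-biased limit tree has no vertices of infinite degree.
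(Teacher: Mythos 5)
The paper does not prove this theorem itself; it simply cites \cite[Theorem 2]{Durhuus:2009sm}, whose proof proceeds exactly as you outline --- via the triangulation/forest bijection, the weak convergence of uniform planar trees to the generic random tree measure of \cite{Durhuus:2006vk}, and continuity of the bijection in the local topology --- so your proposal follows essentially the same route as the reference the paper relies on. One small correction: triangles are not in bijection with non-root forest vertices (the strip between slices $j$ and $j+1$ contributes $k_j+k_{j+1}$ triangles, so $N$ is roughly twice the number of forest vertices and fixing $N$ does not fix the forest size exactly); this does not change the structure of the argument but must be accounted for when pushing forward the uniform measure.
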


There is an interesting relation between UICT and critical Galton-Watson family trees due to a bijection of causal triangulations and rooted planar trees (or forests) (see \cite{MYZ2001} and \cite{Durhuus:2009sm}) which is illustrated in Figure \ref{f0}. To get from the rooted causal triangulation to the rooted planar forest we remove all horizontal edges from the triangulation, furthermore at each vertex we remove the leftmost up-pointing edge. The result is a planar rooted forest where we chose the sequence of roots to be the sequence of vertices on the initial boundary starting with the root vertex $x$ of the root triangle $(x,y,z)$ and following the boundary in an anti-clockwise direction. Connecting the root vertices of the forests to a single external vertex one obtains a planar rooted tree. The inverse relation should now be clear. A detailed and slightly different formulation of the bijection can be found in \cite{MYZ2001,Durhuus:2009sm}.

Consider the critical Galton-Watson branching process with one particle type and off-spring distribution $p_{k}=(1/2)^{k+1},\ k\geq0.$ Let
$\eta_{j}$ be the number of particles at generation $j$. One easily sees that $\eta_{j}$ is a recurrent Markov chain with transition probability
\[
\P_{GW}(\eta_{j+1}=k\mid\eta_j=l) = \frac{1}{2^{k+l}}\binom{k+l-1}{k}.
\]

The following lemma clarifies the connection between the Galton-Watson branching process and the UICT (see for instance \cite[Lemma 4]{Durhuus:2006vk}):

\begin{lm}
\label{vetv} Recall the definition of $k_j$ in Definition \ref{defact}. We have
\begin{equation}
\pi_{m_{0}}(k_{j}=m)=\frac{m}{m_0}\P_{GW}(\eta_{j}=m\mid\eta_{0}=m_{0}). \label{sootn}%
\end{equation}
\end{lm}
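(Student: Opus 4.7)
The plan is to combine the bijection between causal triangulations and planar forests with the size-biasing property of Kesten's critical Galton--Watson tree. The key observation is that, on the tree side, the generation-$j$ population under $\pi_{m_0}$ has a size-biased distribution relative to $\P_{GW}(\cdot\mid \eta_0=m_0)$, which produces exactly the factor $m/m_0$.

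First, I would transfer the problem to the tree side via the bijection illustrated in Figure \ref{f0} (see \cite{MYZ2001,Durhuus:2009sm}): a rooted causal triangulation with rooted boundary of length $m_0$ corresponds to a planar rooted forest of $m_0$ trees, and the horizontal boundary length $k_j$ equals the total number of vertices at generation $j$ of the forest.

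Second, I would invoke the explicit description of the push-forward of $\pi_{m_0}$ under this bijection, which is a direct consequence of the generic random tree construction of \cite[Theorem 2]{Durhuus:2006vk} and its use in the UICT setting in \cite{Durhuus:2009sm,Sisko2011}: one of the $m_0$ roots is chosen uniformly to carry an infinite spine and grows a Kesten size-biased critical Galton--Watson tree (with offspring distribution $p_k=(1/2)^{k+1}$, which is critical since $\sum k p_k=1$), while the remaining $m_0-1$ trees are independent copies of the unconditioned critical Galton--Watson tree. From the definition of Kesten's tree, or from a one-line size-biasing computation using $\E_{GW}[\eta_j\mid \eta_0=1]=1$, the generation-$j$ size $\eta_j^{(K)}$ of the Kesten tree started from one particle satisfies $\P(\eta_j^{(K)}=k)=k\,\P_{GW}(\eta_j=k\mid \eta_0=1)$.

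Third, conditioning on which root $I\in\{1,\dots,m_0\}$ is the special one, I would write $k_j=\sum_{\ell=1}^{m_0}\eta_j^{(\ell)}$, where given $I=i$ the variable $\eta_j^{(i)}$ is size-biased and the other $\eta_j^{(\ell)}$ are i.i.d.\ with the unconditioned law. The symmetric identity $\sum_{i=1}^{m_0}k_i\prod_{\ell}q(k_\ell)=\big(\sum_i k_i\big)\prod_\ell q(k_\ell)$, applied inside the sum over compositions $k_1+\cdots+k_{m_0}=m$, collapses the $1/m_0$ averaging over $I$ against the factor $\sum_i k_i=m$ and gives $\pi_{m_0}(k_j=m)=\frac{m}{m_0}\P_{GW}(\eta_j=m\mid \eta_0=m_0)$.

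The main ``obstacle'' is the identification of the forest-side law of $\pi_{m_0}$ as the generic random forest of \cite{Durhuus:2006vk}; once that is in hand, the remainder is the short size-biasing computation sketched above, and criticality of $\{p_k\}$ is what ensures the mean-one normalization that makes the factor come out as $m/m_0$ rather than $m/(m_0\E\eta_j)$.
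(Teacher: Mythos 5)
Your proposal is correct and follows exactly the route the paper implicitly relies on: the paper gives no proof of Lemma~\ref{vetv} beyond citing \cite[Lemma 4]{Durhuus:2006vk}, whose argument is precisely the transfer to the forest side via the bijection of Figure~\ref{f0} followed by the size-biasing computation you describe (uniformly chosen spine root carrying Kesten's tree, independent critical GW trees elsewhere, and the collapse of the $1/m_0$ average against $\sum_i k_i = m$ using criticality $\E_{GW}[\eta_j\mid\eta_0=1]=1$). Your identification of the one genuinely non-trivial input --- that the push-forward of $\pi_{m_0}$ is the generic random forest of \cite{Durhuus:2006vk} --- is the right place to lean on the cited construction, so nothing is missing.
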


Note that the RHS of the previous expression can also be interpreted as a critical Galton-Watson process conditioned on non-extinction, i.e. $\pi_{m_{0}}(k_{j}=m)=\lim_{N\to\infty} \P_{GW}(\eta_{j}=m\mid\eta_{0}=m_{0},\eta_{N}>0)$.

\begin{figure}[t]
\begin{center}
\includegraphics[width=12cm]{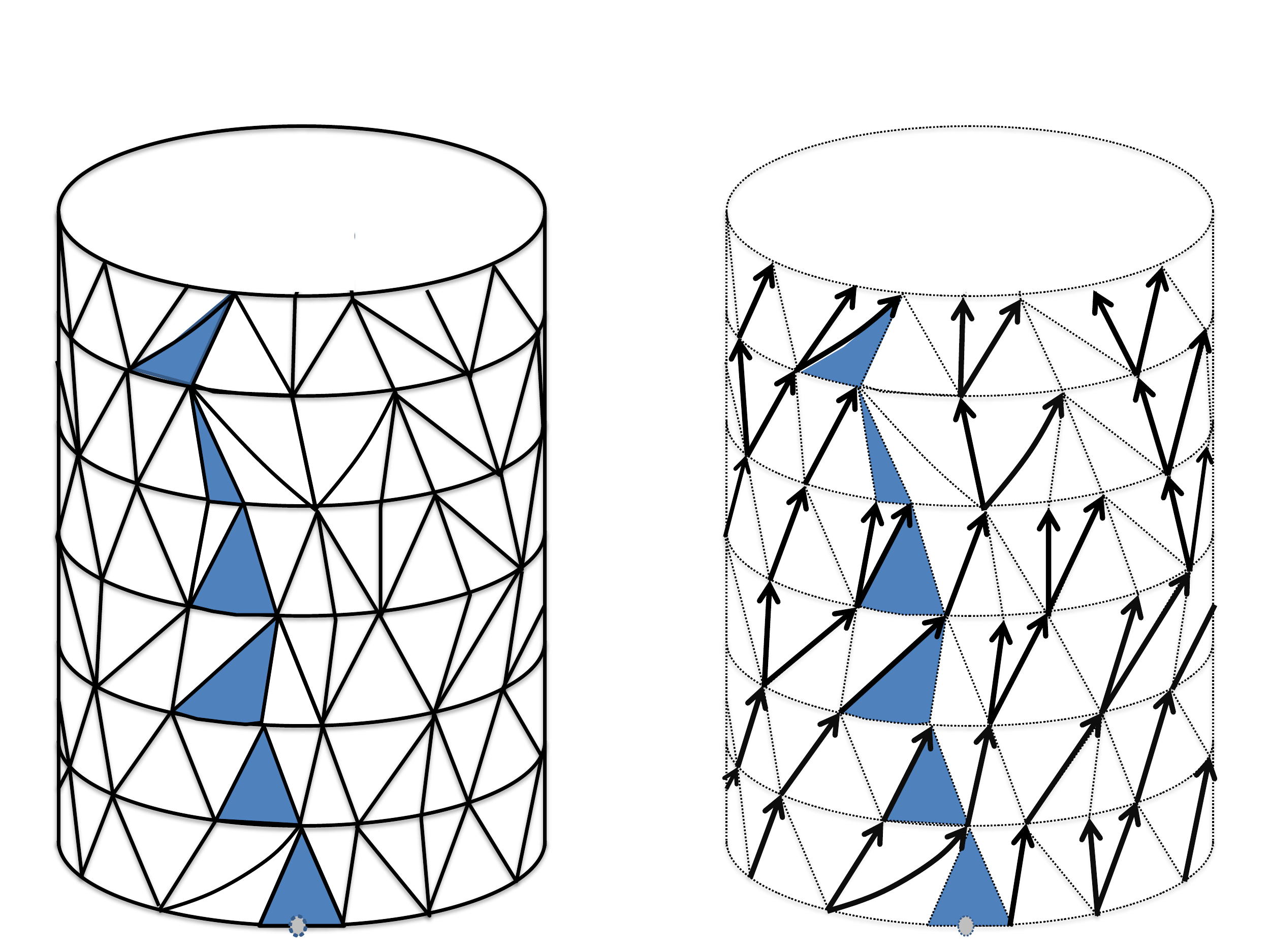}
\end{center}
\caption{Bijection between rooted causal triangulations and (rooted) forests.}%
\label{f0}%
\end{figure}

\subsection{Growth process}

Here we define the process with discrete time which constructs (samples) a UICT by adding one triangle at each step.

Let $\{x_1, \dots, x_m\}$ be the set of boundary vertices for some triangulation of the disc with $m$ boundary edges. We label the vertices following the direction on the boundary where $(x_m,x_1)$ is the marked edge.

In any step, we will add a triangle to the marked edge and after that we put the new mark on another edge. One allows this to be done in two different ways. In particular, one can add a triangle $(x_m,y, x_1)$ to the marked edge $(x_m,x_1)$ where $y$ is either a new vertex, we call this the $(+)$-move, or $y=x_2$, where $x_2$ is the next vertex after $x_1$ following the direction on the boundary; we call this the $(-)$-move. If $y$ is a new vertex, then the next marked edge will be $(y,x_1)$. In the case $y=x_2$ the new marked edge is $(x_m,x_2)$. If the boundary consists of only one edge $(x_1,x_1)$, then in the next step one can only add a triangle $(x_1, y, x_1)$ with the marked edge $(y,x_1)$. Note that the marked edge belongs to the boundary at each step of the growth process.

We will consider the following special starting triangulation with $m+1$ vertices $(m\ge 1)$: a triangulation of the disc with $m$ edges on the boundary (a $m$-gon), having $m$ triangles and one vertex in the interior of the disc which is a common vertex of all $m$ triangles. This vertex we call the  $0$-root or $0$ (in contrast with the root triangle). Let us denote this triangulation as $\mathcal S_m$. Note that any move preserves the topology of the triangulation as a disc. Denote by $T_n$ the triangulation of the disc after $n$ moves and let $l(T_n)$ be the length of the boundary of the triangulation $T_n$. Further, let $e_n=(v^1_n, v^2_n)$ be the marked edge of $T_n$ with vertices $v^1_n, v^2_n \in T_n$.

We now assign probabilities to the growth process: Conditioning on the length of the boundary of the triangulation $T_n$, we can add another triangle to it by choosing one of the above two moves randomly according to the probabilities
\begin{equation}\label{yamb-tp}
    \mathbb P( (\pm)-\mbox{move} \mid l(T_n)=m ) = \frac{1}{2} \frac{m\pm 1}{m}.
\end{equation}
Denote by $\mathcal T_m(n)$ the set of all possible triangulations of the disc obtained by applying all possible (permitted) sequences of length $n$ of the $(+)$ and $(-)$ moves starting with $\mathcal S_m$. Given the transition probabilities \eqref{yamb-tp} one has that $T_n$ is a Markov chain with  state space $\cup_{n\ge 0}\mathcal T_m(n)$.

Note that in any move one adds one triangle to the triangulation and changes the length of the boundary of the triangulation by one: the $(+)$-move increases the boundary by one, while the $(-)$-move decreases the boundary by one. This process of growing the triangulation is basically the time reversal of the so-called ``peeling" process, which is related to so-called loop equations for matrix models in the physics literature (see \cite{Ambjorn:1997di} in the context of DT and \cite{CDTmatrix2,Ambjorn:2008gk,Ambjorn:2008jf,ZohrenThesis} in the context of CDT).

\begin{figure}[t]
\centering
\subfigure[The $(+)$-move]{
\includegraphics[width=5cm]{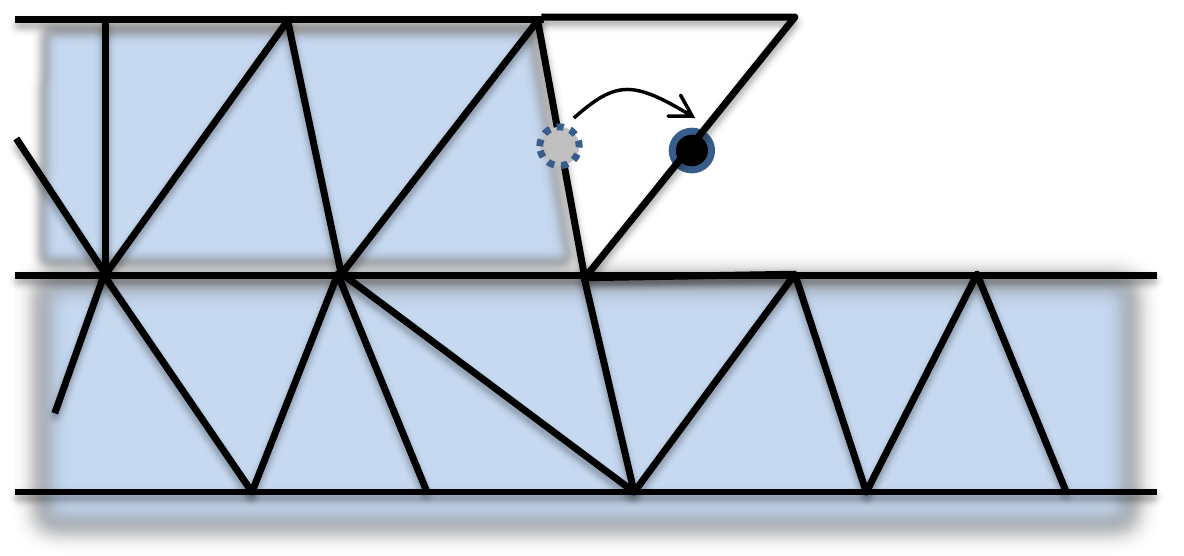}
\label{fig1a}
}
\subfigure[The $(-)$-move]{
\includegraphics[width=5cm]{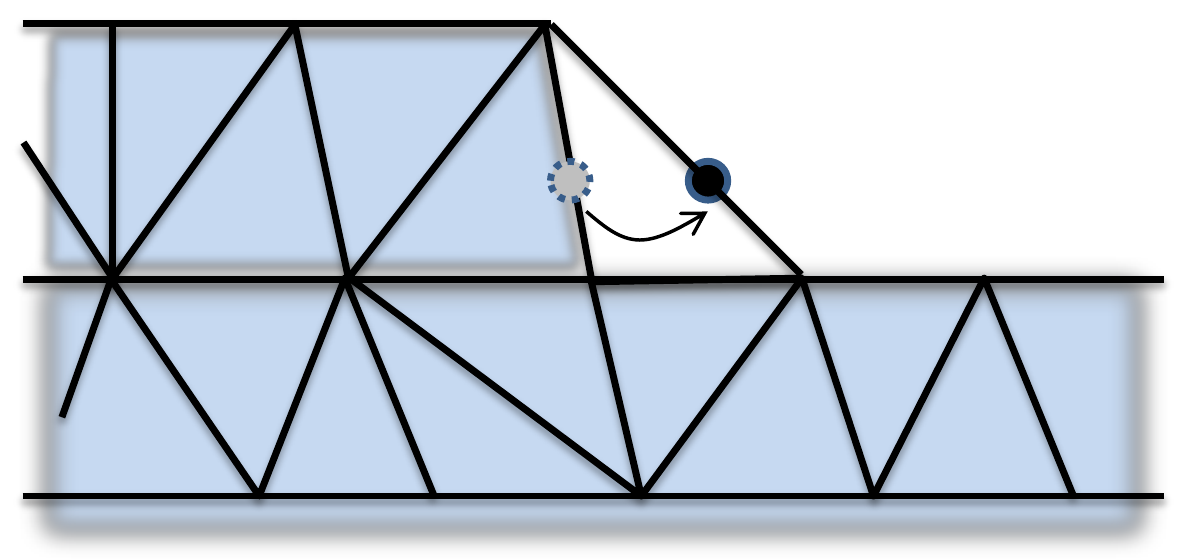}
\label{fig1b}
}
\caption{The two different moves of the growth process.\label{fig1}}
\end{figure}

The process $T_n$ determines a process which describes the evolution of the length of the boundary of the triangulation $T_n$. Denote the length $M_n=l(T_n)$. Define $\xi_n=M_{n+1}-M_n$. The probabilities \eqref{yamb-tp} one can rewrite as
\begin{equation}\label{yamb-tp1}
    \mathbb P( \xi_n = \pm 1 \mid M_n=m ) = \frac{1}{2} \frac{m\pm 1}{m}. 
\end{equation}
It is clear that $M_n$ is a Markov chain with state space  $\mathbb N=\{1,2,3,\dots\}$ and transition probabilities \eqref{yamb-tp1}.

We now describe the relationship between the process $T_n$ and the process $M_n$. If we know the sequence of $\{M_n\}_{n=0,1,\dots, s}$ then we know the sequence of $(\pm)$-moves and consequently we know the triangulation $T_s$. Inversely, if we fix the triangulation $T_s$ from $\mathcal T_m(s)$ we can reconstruct the sequence $\{M_n\}_{n=0,1,\dots, s}$. Hence, one has:

\begin{rmk}\label{yamb-th1}
For any $s\in \mathbb N$, there is a one-to-one correspondence $g$ between $\mathcal T_m(s)$ and the set of sequences $\{M_n\}_{n=0,1,\dots, s}$, with $M_0=m$, fulfilling
\begin{equation}\label{yamb-th1-p}
\mathbb P ( T_s = T \mid T_0 = \mathcal S_m) = \mathbb P ( g(T) \mid M_0=m ).\nn
\end{equation}
\end{rmk}


Due to this relation we also call the Markov chain $\{M_n\}_{n=0,1,\dots, s}$ the growth process.

We will now make the link to almost causal triangulations. For any $s\in \mathbb N$ and any triangulation $T_s$ from the set $\mathcal T_m(s)$ there exists a number $h(T_s)$ (to be defined as the ``height'' of the triangulation) such that the set of all vertices of the triangulation $V(T_s)$ can be divided into the disjoint sets corresponding to the distances between the vertices and the $0$-root: $V(T_s) = \cup_{i=0}^{h(T_s)} V_i$, where $V_i=V_i(T_s)$ is the set of vertices of $T_s$ which have distance to the $0$-root equal to $i$, where $V_0$ contains only the $0$-root vertex. Thus $h(T_s)$ is the maximal distance to the $0$-root.

\begin{df}\label{yamb-mom1}
For the growth process $T_n$ let us define the following moments $n_t$, $t=1, 2, \dots$.
\begin{equation}
    n_t:=\min\{ s>0:\ dist(v^1_s, 0\!-\!\text{root})=dist(v^2_s,  0\!-\!\text{root})=t \}\nn
\end{equation}
where we recall that $(v^1_s, v^2_s)=e_s$ are the vertices adjacent to the root edge.
\end{df}


Denote by $\hat{T}_n$  the triangulation $T_n$ without the $0$-root and the edges attached to it, then we have:

\begin{theo}\label{yamb-th2}
 $\hat{T}_{n_t}$ is an almost causal triangulation of $C_t$.
\end{theo}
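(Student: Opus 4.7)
I would argue by induction on $t$, with inductive hypothesis that $\hat T_{n_{t-1}}$ is an ACT of $C_{t-1}$ and that at time $n_{t-1}$ the boundary cycle of $T_{n_{t-1}}$ coincides with the set $V_{t-1}$ of vertices at graph distance $t-1$ from the $0$-root. (The case $t=1$ is essentially immediate: $\hat T_{n_1}$ is the cycle $V_1$ after removing the $0$-root and its $m$ incident edges from $\mathcal S_m$.) Since each move preserves the disc topology of $T_n$ and the $0$-root stays interior throughout, $\hat T_{n_t}$ automatically carries the topology of $C_t$; the content of the theorem is the combinatorial ACT structure. The main tool for the inductive step is the invariant $(\star)$: for every $n$ with $n_{t-1} \le n \le n_t$, each boundary vertex of $T_n$ lies in $V_{t-1} \cup V_t$, the boundary cycle decomposes cyclically as a contiguous block of $V_{t-1}$-vertices followed by a contiguous block of $V_t$-vertices, and the marked edge $(x_m, x_1)$ sits at the interface, with $x_1$ the first ``bottom'' vertex and $x_m$ the last ``top'' vertex (or, when the top block is empty, $x_m$ is simply the last bottom vertex before $x_1$).

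The plan is to verify $(\star)$ by induction on $n$, branching on whether the move is $(+)$ or $(-)$ and on the levels of the two marked-edge endpoints. The substantive points in the case analysis are: (i) a new vertex $y$ introduced by a $(+)$-move is adjacent only to the two marked endpoints, which by $(\star)$ and $n < n_t$ lie in $V_{t-1} \cup V_t$ with at least one in $V_{t-1}$, so $y$ sits at graph distance exactly $t$ from the $0$-root; (ii) the third vertex of a $(-)$-move triangle is the next boundary vertex $x_2$, whose level $(\star)$ already pins down as $t-1$ or $t$; (iii) the new edges added at each step lie inside $V_{t-1} \cup V_t$, so graph distances of pre-existing vertices to the $0$-root cannot decrease and the labels $V_j$ for $j \le t-1$ are preserved. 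The same case analysis shows that $n_t$ occurs precisely when the last $V_{t-1}$-vertex on the boundary is removed by a closing $(-)$-move, at which instant the boundary becomes a cycle entirely in $V_t$ of length $l(T_{n_t}) \ge 1$.

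Combining $(\star)$ at $n = n_t$ with the inductive hypothesis, every triangle added between $n_{t-1}$ and $n_t$ has all three vertices in $V_{t-1} \cup V_t$ and therefore belongs to the strip $S^1 \times [t-1, t]$ in the sense of Definition \ref{defact}; the triangles added in earlier epochs belong to strips $S^1 \times [j-1, j]$ for $j \le t-1$ by the inductive hypothesis; and each slice carries the positive, finite number of edges $k_j = l(T_{n_j})$ coming from the cycle that bounds $T_{n_j}$ at its completion. This is exactly the ACT property. The main obstacle is the patient bookkeeping of $(\star)$ across the four move cases, particularly the degenerate configurations in which the boundary cycle is very short so that the ``two-block'' decomposition has to be read cyclically with endpoints possibly coinciding; the state-space constraint $M_n \ge 1$ and the minimality in Definition \ref{yamb-mom1} should rule out the genuinely pathological situations.
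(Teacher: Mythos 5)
Your proposal is correct and takes essentially the same route as the paper: your invariant $(\star)$ (boundary $=$ a contiguous $V_{t-1}$-arc followed by a contiguous $V_t$-arc with the marked edge at the interface, maintained move by move) is precisely the content of the paper's three-phase description of starting, filling and finishing a strip. Both arguments then conclude by identifying the cycles on the $V_i$, noting that every triangle created in the epoch $(n_{t-1},n_t]$ has its vertices in $V_{t-1}\cup V_t$, and mapping the cycles to the slices $S^1\times\{i\}$.
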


This means that between the moments $n_t$ and $n_{t-1}$ the growth process $T_t$ constructs an almost causal  triangulation of the strip $S^1\times [t-1,t]$.

\begin{proof} The proof follows directly from the detailed description of the growth process: It is obvious that $\hat{T}_{n_t}$ is a triangulation of the cylinder $C_t$, because it is a triangulation of the disc without the faces of an initial $m$-gon $\mathcal S_m$. Removing $\mathbb{S}_m=\mathcal{S}_m\setminus \partial\mathcal{S}_m$ from the disc adds a hole in the disc and makes  $\hat{T}_{n_t}$ homeomorphic to the cylinder.

The theorem states that there exists a homeomorphism $f$ of the disc with a hole with the embedded graph $\hat{T}_{n_t}$ into the cylinder which maps the set $V_i$, $i=1,\dots,t$ into the slice $S^1\times\{i\}$ of the cylinder $C_t$ such that any triangle will belong to some strip $S^1\times [i,i+1]$. For that, firstly, we prove that for any $i$, $i=1,\dots,t$, there exists a Hamilton path (circle) consisting of all vertices $V_i$: suppose $|V_i|=l_i$, ordering the vertices of $V_i=\{x_1, x_2, \dots, x_{l_i}\}$ in order of their appearance we will show that there exists the circle $(x_j, x_{j+1}), j=1,\dots, l_i-1$ and $(x_{l_i},x_1)$ in $\hat T_{n_t}$. Secondly, we will show that the homeomorphism $f$ that maps the vertices $V_i$ with its Hamilton path into $S^1\times\{i\}$ maps any triangular face into one strip.

\begin{figure}[t]
\begin{center}
\includegraphics[width=8cm]{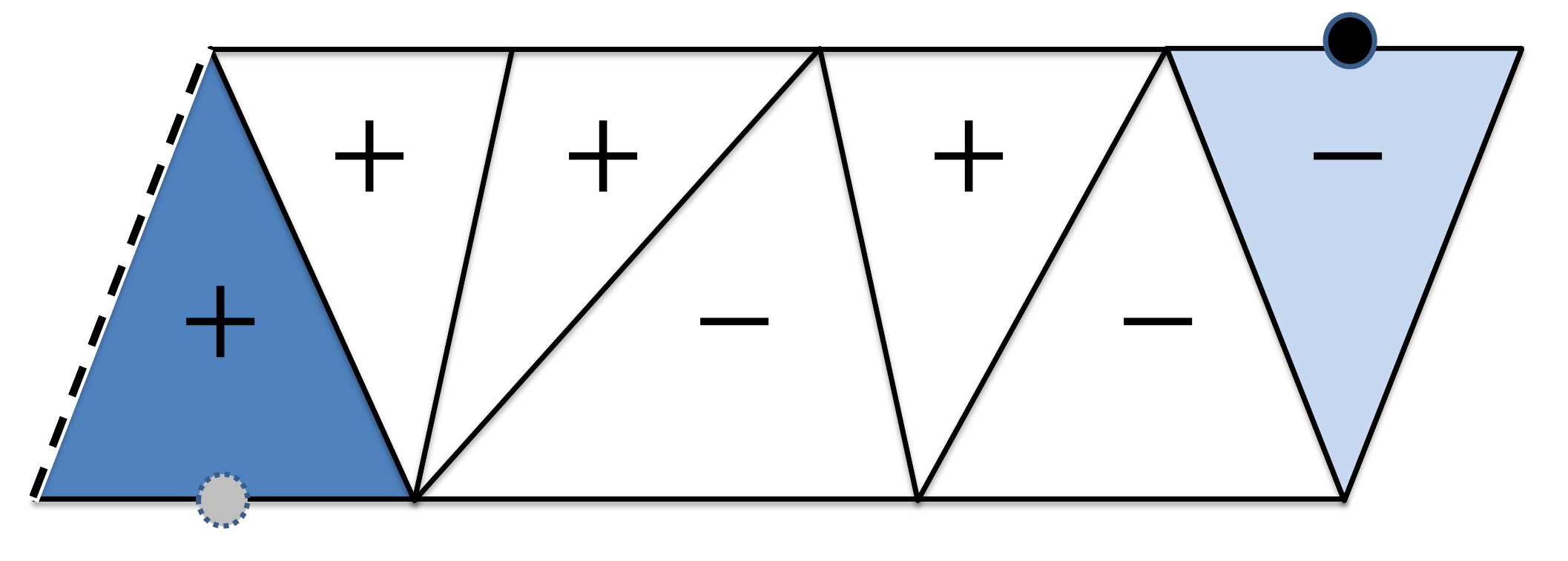}
\end{center}
\caption{Construction of a slice of an almost causal triangulation using the following sequence of moves: $+$, $+$, $+$, $-$, $+$, $-$, $-$. The left hand side and the right hand side of the strip are periodically identified.}%
\label{f2}
\end{figure}

In the following we describe the three phases of the construction of a triangulation of a strip $S^1\times [i-1,i]$: starting, filling and finishing off the strip.
\begin{enumerate}
\item[(i)]
{\it Starting a strip.} We start from $l_{i-1}$ edges $$E_{i-1} = \{ (x_1, x_2), \dots, (x_{l_{i-1}-1}, x_{l_{i-1}}), (x_{l_{i-1}}, x_{1})\}$$ and $l_{i-1}$ vertices $x_1, x_2, \dots, x_{l_{i-1}} \in S^1\times \{i-1\}$ with distance $i-1$ from the $0$-root: $V_{i-1}=\{x_1, x_2, \dots, x_{l_{i-1}} \}$. Let $e=(x_{l_{i-1}},x_1)$ be the marked edge.
The first phase continues until the first $(+)$-move.

Suppose $l_{i-1}>1$. If the first move is a $(+)$-move, then the process adds a new triangle $(x_{l_{i-1}},y_1,x_1)$ and $y_1\in S^1\times\{i\}$ has  distance $i$ to the $0$-root ($V_i=\{y_1\}$). The new marked edge $e=(y_1,x_1)$ connects two vertices with different distances to the $0$-root and we continue to the next phase. If, on the other hand, the first move is a $(-)$-move, it adds a new edge $(x_{l_{i-1}}, x_2)$ which will be the new marked edge. One observes that all points $x_1, \dots, x_{l_i}$ have the same distance to the $0$-root and the $(-)$-move does not change their distances. Moreover, a sequence of $k$ $(-)$-moves, $k< l_{i-1}$, maintains the set $V_i$, and the next $(+)$-move starts the next level. Thus starting with $k$ $(-)$-moves ($k< l_{i-1}$) before the first $(+)$-move one obtains the following boundary of the triangulation $(x_{l_{i-1}}, y_1,  x_{k+1}, \dots,  x_{l_{i-1}-1})$, with $y_1\in S^1\times \{ i\}$. Note that the case $x_{k+1}=x_{l_{i-1}}$ is allowed. In this particular case, after $l_{i-1}-1$ $(-)$-moves the length of the current boundary of the triangulation is equal 1, and the next step has to be a $(+)$-move.

In the case $l_{i-1} = 1$, $V_{i-1} = \{ x_1\}$, and $e=(x_1,x_1)$ the boundary of the triangulation is equal to 1, and the first step can be only a $(+)$-move.

\item[(ii)]
{\it Filling a strip}. During this phase we have a set of vertices $V_i=V_i(T_n)=\{y_1, \dots, y_k\}\ne \emptyset$ with distance $i, i\ge 2$ to the $0$-root in the graph $T_n$ and $h(T_n)=i$. Starting with $V_i=\{y_1\}$, let $E_{i-1}^0 \subseteq E_{i-1}$ be the set of edges which belong to the boundary of the triangulation (see Figure~\ref{f3}). The set $E_{i-1}^0$ decreases by one element with any $(-)$-move and once it becomes empty this phase stops and we proceed to $(iii)$. Furthermore, any $(+)$-move adds a new vertex to the set $V_i$ connected by an edge to the previous vertex from $V_i$.
%


\item[(iii)]
{\it Finishing off a strip}. This phase continues until the first $(-)$-move, which finishes a triangulation of the strip. Suppose that we have $l_i$ vertices in the set $V_i$ before the $(-)$-move. The marked edge connects $y_{l_i}$ (the last vertex in $V_i$) and $x_{l_{i-1}}$ (the last vertex in $V_{i-1}$): $e=(y_{l_i}, x_{l_{i-1}})$. The following edge on the boundary is $(x_{l_{i-1}}, y_1)$, thus the $(-)$-move will connect the vertices $y_{l_i}$ and $y_1$. Note also that this is the first moment when the next marked edge $e=(y_{l_i},y_1)$ will connect two points with the same distance $i$ to the $0$-root, i.e.\ it defines the moment $n_i$ as given in Definition~\ref{yamb-mom1}.
\end{enumerate}

From the description given above it is clear that any set $V_i=\{v_1, \dots, v_{l_i}\}$ of $T_n$ with $n\geq n_i$ has a circle connecting a sequence of vertices from $V_i$: the edges $(v_i,v_{i+1}), i=1,\dots, l_i-1$ are created by $(+)$-moves and $(v_{l_i}, v_1)$ is created by a $(-)$-move defining the moment $n_i$. Denote this circle graph by $G_i$. Moreover, any $(+)$-move during the filling stage will create a ``down" triangle of which exactly one edge will connect vertices with distance $i$ to the $0$-root and two edges will connect these two vertices to one vertex from $V_{i-1}$; further any $(-)$-move creates an ``up" triangle consisting of two vertices from $V_{i-1}$ and one vertex from $V_i$.

Thus the description of the construction of a strip provides the existence of a homeomorphism $f$ of the disc without $\mathbb S_m$ into $C_t$ such that the image of the $G_i$ in the disc maps into the circle $S^1\times \{i\}$ of $C_t$ for any $i=1,\dots, t$. Moreover any such homeomorphism maps the triangular faces of $T_n$ created between the time $n_{i-1}$ and $n_i$ into the strip $S^1\times [i-1,i]$ for $i=2,\dots, t$.
\end{proof}

\begin{rmk}\label{yamb-lemma1}
From the proof of the proceeding theorem it follows directly that $n_t <\infty$ $\mathbb P$-a.s.
\end{rmk}

\begin{figure}[t]
\begin{center}
\includegraphics[width=5cm]{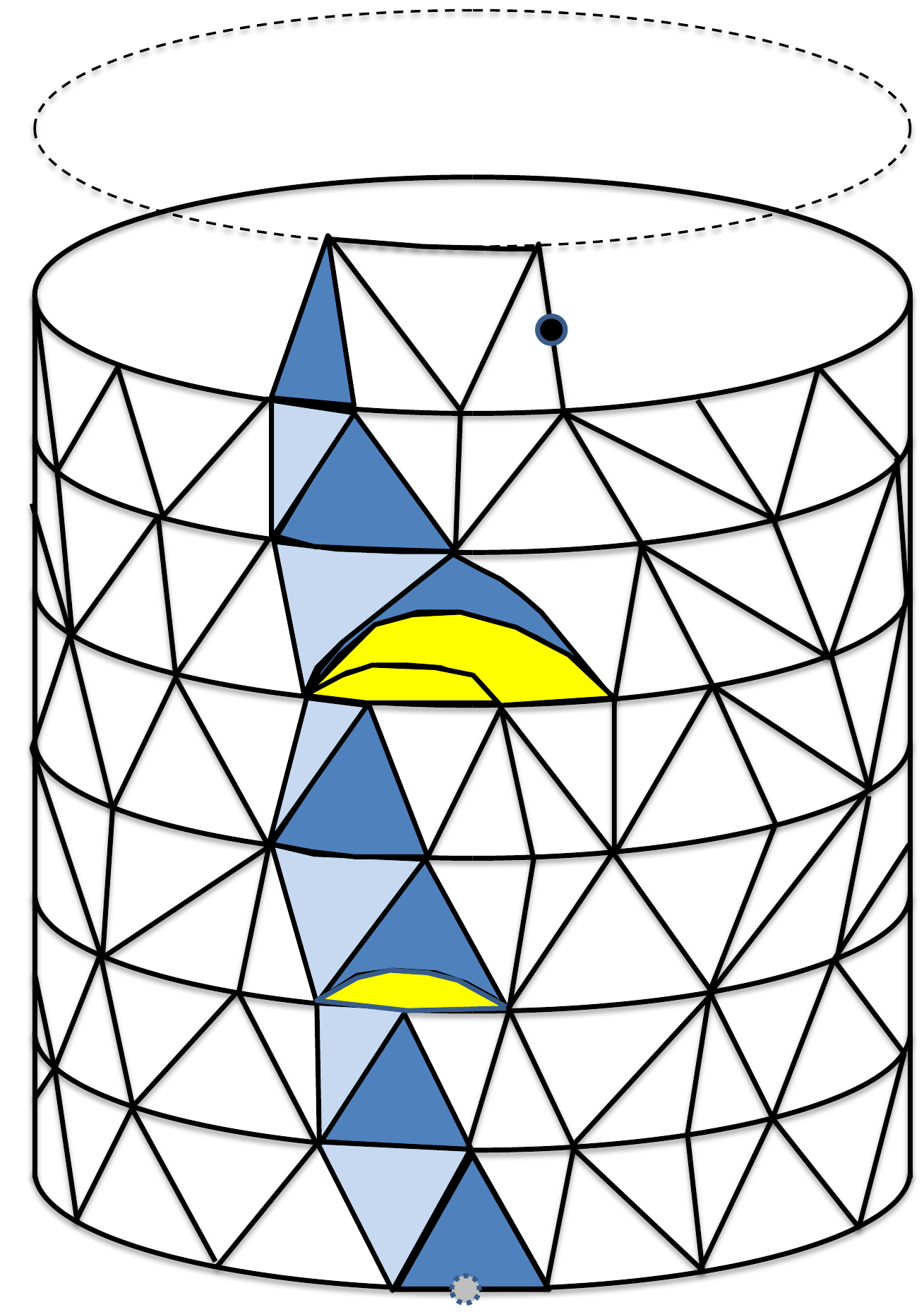}
\end{center}
\caption{An example of a triangulation created by the growth process, where the first six slices are an example of an almost causal triangulation. }%
\label{f3}
\end{figure}

\begin{example}
Figure \ref{f2} shows an example of a sequence of moves of size $n=7$: Starting with $\mathcal{S}_3$, i.e.\ $l(T_0)=M_0=3$ and creating a strip with final boundary of length $l(T_7)=M_7=4$. Here the last move completes the first strip of the triangulation, hence, $n_1=0$ and $n_2=7$. We observe that the result is a causal triangulation (with a specific, so-called staircase boundary condition). However, as one can observe in Figure \ref{f3}, if one starts a new strip with a $(-)$-move one can create certain outgrowths. Therefore the name almost causal triangulations.
\end{example}

\begin{figure}[t]
\begin{center}
\includegraphics[width=10cm]{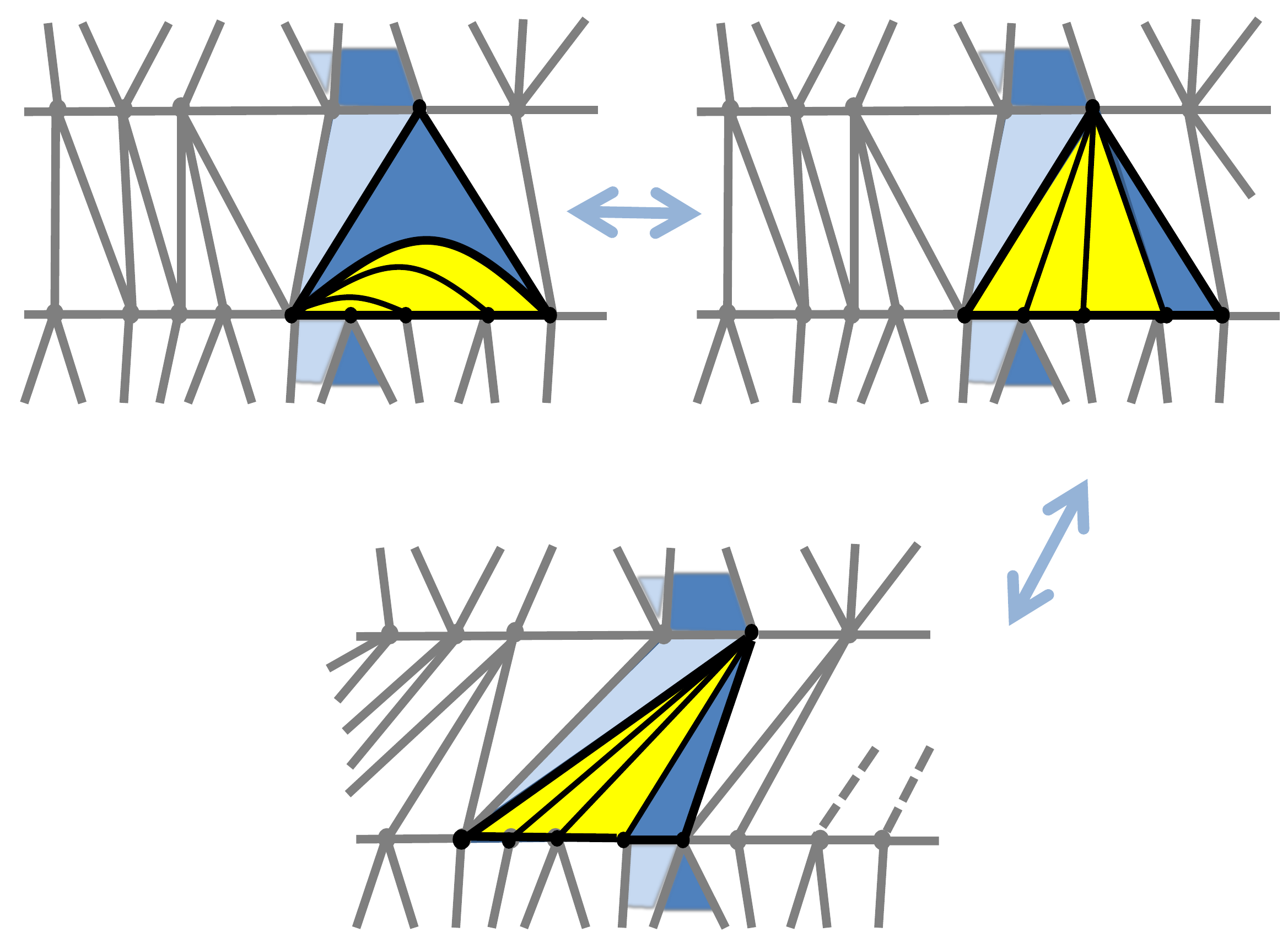}
\end{center}
\caption{Removing the ``defects'': Bijection between the set of almost causal triangulations created by the growth process and causal triangulations.}%
\label{fig5}
\end{figure}

In this section we have presented a growth process which samples almost causal triangulations by adding one triangle at a time using two different moves with probabilities \eqref{yamb-tp}.

The ``defects", where the triangulation generated by the growth process differs from a causal triangulation, can only occur at the moments where one starts a new strip and in particular if the process starts a strip by a sequence of $(-)$-moves. The defects can never occur during the filling and finishing of a strip (see Figure~\ref{f3}). For example starting the strip with a sequence of $k$ $(-)$-moves and then a $(+)$-move we obtain a configuration like in the left-up-side picture in Figure~\ref{fig5}. In fact, one can transform any such almost causal triangulation to a causal triangulation.

%
\begin{lm}\label{bijection}
There is a one-to-one map between $\mathcal{C}^g_{m_0}(t)$, the set of all possible triangulations
$T_{n_t}$ formed by the subset of almost causal triangulations created by the growth process started from $\mathcal{S}_{m_{0}}$ and stopped at $n_t$, i.e. where all vertices of the boundary are at distance $t$ to the $0$-root, and the set $\mathcal{C}^{c}_{m_0}(t)$ of rooted causal triangulations of height $t$ with initial boundary of length $m_0$.
\end{lm}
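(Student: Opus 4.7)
The plan is to construct an explicit local bijection $\Phi : \mathcal{C}^g_{m_0}(t) \to \mathcal{C}^c_{m_0}(t)$ that acts strip-by-strip, as hinted by Figure~\ref{fig5}. Starting from the three-phase decomposition in the proof of Theorem~\ref{yamb-th2}, one sees that the only triangles of a growth-process ACT which violate the causal condition are those created by $(-)$-moves during phase~(i), i.e.\ by the initial run of $(-)$-moves at the start of each strip. Suppose strip $i$ opens with $k \geq 0$ such $(-)$-moves followed by the first $(+)$-move of the strip. Then, in the notation of the proof of Theorem~\ref{yamb-th2}, these moves produce the $k$ ``defect'' triangles $(x_{l_{i-1}}, x_j, x_{j+1})$ for $j = 1, \ldots, k$ together with the single causal down triangle $(x_{l_{i-1}}, y_1, x_{k+1})$, where $y_1$ is the first vertex of $V_i$ produced in the strip. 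Collectively these $k+1$ triangles form a ``defect fan'' with apex $x_{l_{i-1}}$ that covers the $V_{i-1}$-vertices $x_1,\dots,x_{k+1}$ and the $V_i$-vertex $y_1$; all triangles produced by phases (ii) and (iii) are already causal up or down triangles.

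I define $\Phi$ strip-by-strip by the following local move: replace the defect fan at $x_{l_{i-1}}$ by the causal ``down fan'' at $y_1$ consisting of the $k+1$ down triangles
\[
(x_{l_{i-1}}, y_1, x_1),\ (x_1, y_1, x_2),\ \ldots,\ (x_k, y_1, x_{k+1}),
\]
keeping every other triangle of the strip unchanged. Each new triangle has exactly one edge on the slice $S^1 \times \{i-1\}$, so the strip becomes causal; combined with the already-causal triangles from phases (ii) and (iii), $\Phi(T)$ is a causal triangulation of $C_t$. The transformation preserves the vertex sets $V_{i-1}, V_i$, the slice lengths $k_{i-1}, k_i$, the per-strip triangle count $k_{i-1}+k_i$, the initial boundary length $m_0$, the height $t$, the rooting, and the closing edge $(y_{l_i}, y_1)$ that is passed as the starting marked edge of the next strip. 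Consequently the strip-local moves cohere into a single well-defined global map.

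For the inverse, given $T' \in \mathcal{C}^c_{m_0}(t)$, identify on each strip $i$ the canonical $y_1 \in V_i$ as the apex of the unique causal down triangle standing on the edge $(x_{l_{i-1}}, x_1)$ of $S^1\times\{i-1\}$; this is intrinsically determined by the marked edge inherited from the previous strip's closing and by the rooting. Its down fan has some size $k+1 \geq 1$ with base vertices $x_{l_{i-1}}, x_1, \ldots, x_{k+1}$. The inverse move replaces this down fan by the defect fan $(x_{l_{i-1}}, x_j, x_{j+1})$ for $j=1,\ldots,k$ plus the single down triangle $(x_{l_{i-1}}, y_1, x_{k+1})$. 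The resulting strip is precisely of the type produced by the growth process via $k$ $(-)$-moves, then a $(+)$-move, then the remaining already-causal triangles of $T'$ read off in the order they would be produced by phases (ii) and (iii). This defines $\Phi^{-1}$, and $\Phi \circ \Phi^{-1} = \mathrm{id}$, $\Phi^{-1} \circ \Phi = \mathrm{id}$ follow at once from the strip-local nature of both maps.

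The main obstacle is not conceptual but a matter of careful bookkeeping: one has to track the marked edge and the canonical identification of $y_1$ and $x_{l_{i-1}}$ across the transitions between consecutive strips, in order to verify that the strip-local moves really do cohere into mutually inverse maps on full triangulations of $C_t$ rather than merely on individual strips. Once this coherence has been checked, the bijection is transparent.
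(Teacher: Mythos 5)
Your forward move is indeed the content of the paper's \emph{first} step: replacing the opening run of $k$ $(-)$-moves plus the first $(+)$-move of a strip by the fan of $k+1$ up-pointing triangles with apex $y_1$ over $(x_{l_{i-1}},x_1),(x_1,x_2),\dots,(x_k,x_{k+1})$ is exactly the substitution of the move sequence $(-)^k(+)$ by $(+)(-)^k$. But this step alone is \emph{not} a bijection, and the "bookkeeping" you defer is precisely where the construction breaks. The fan replacement destroys the value of $k$: after it, one cannot tell how many of the up-triangles in $y_1$'s fan came from defect triangles rather than from genuine $(-)$-moves performed after a first $(+)$-move. Concretely, take $m_0=3$, lower slice $x_1,x_2,x_3$ with marked edge $(x_3,x_1)$, and compare the two permitted strip sequences $(-)(+)(+)(-)(-)$ and $(+)(-)(+)(-)(-)$. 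Both end with upper slice $\{y_1,y_2\}$ and closing marked edge $(y_2,y_1)$; the first creates one defect triangle $(x_3,x_2,x_1)$, the second none. After your replacement both yield the \emph{same} causal strip (up-triangles over $(x_3,x_1),(x_1,x_2),(x_2,x_3)$ with apices $y_1,y_1,y_2$; down-triangles $(y_1,y_2)$ at $x_2$ and $(y_2,y_1)$ at $x_3$), so $\Phi$ identifies two distinct elements of $\mathcal{C}^g_3(2)$. Dually, $\Phi$ is not surjective: in its image the closing down-triangle of each strip always has its lower apex at $x_{l_{i-1}}$, equivalently the fan of the apex over the marked edge never extends to the \emph{left} of the marked edge; the valid causal strip with up-apices $y_1,y_2,y_1$ over $(x_3,x_1),(x_1,x_2),(x_2,x_3)$ and down-triangles at $x_1$ and $x_2$ is therefore never produced. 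Your proposed inverse, applied to the common image above, returns the defective configuration, so $\Phi^{-1}\circ\Phi=\mathrm{id}$ fails on defect-free triangulations. Indeed, for this strip there are six growth trajectories but only three distinct images under the fan replacement alone, so no amount of care in tracking marked edges can rescue a strip-local map of this form.

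The missing ingredient is the paper's \emph{second} step (the lower picture in Figure~\ref{fig5}): after correcting the move sequence one performs an $l$-shift of the lower slice of the affected strip, i.e.\ one rotates the attachment of the strip to slice $i$ by the number $l$ of initial $(-)$-moves. This re-encodes $l$ geometrically, as the number of up-triangles lying immediately to the left of the marked up-triangle and sharing its apex before the first down-triangle is met --- which is exactly what the paper's inverse reads off ("find the first left down triangle $\Delta$"). With the shift the map becomes injective and reaches configurations such as the one above; without it the statement of Lemma~\ref{bijection} is simply false for the map you define. So the gap is conceptual, not clerical: a second, non-local normalization of each strip is required, and your proof contains no substitute for it.
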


\begin{proof}
We defined $\mathcal{C}^g_{m_0}(t)$ to be the set of all possible rooted triangulations of the disc with all vertices of the boundary at distance $t$ to the $0$-root obtained by applying permitted sequences of the $(+)$ and $(-)$ moves starting from $\mathcal S_{m_0}$. 
Note that the set $\mathcal{C}^g_{m_0}(t)$ is only a subset of the set of all possible almost causal triangulations allowed by its definition. We denoted $\mathcal{C}^{c}_{m_0}(t)$ to be the set of all rooted causal triangulation of the cylinder $C_t$, with $m$ vertices on the zero-slice. Lemma~\ref{bijection} then states that there exists an one-to-one correspondence between  $\mathcal{C}^g_{m_0}(t)$ and $\mathcal{C}^{c}_{m_0}(t)$. To prove the Lemma we give an explicit construction of this correspondence:

The construction is divided in two steps as illustrated in Figure~\ref{fig5}. The first step is to ``correct" the sequence of moves that creates ``defects" in the triangulations. Suppose that we have defects in the $i$-th strip $S^1\times [i,i+1]$.
The move that describes the transformation of such an almost causal triangulation into a causal triangulation is presented in the upper line of  Figure~\ref{fig5}, where the sequence of moves $$\underbrace{(-),\dots, (-)}_{l}(+)$$ is substituted by the sequence of moves $$(+)\underbrace{(-),\dots, (-)}_{l }$$ or in terms of the $M_n$ process:
\begin{eqnarray}
&& \{\xi_{n_i+1}=-1, ... , \xi_{n_i+l}=-1,\xi_{n_i+l+1}=+1 \} \to \nonumber \\
 &&\,\,\,\,\,\,\,\,\,\,\,\,\,\,\, \,\,\,\,\,\,\,\,\,\,\,\,\,\,\,  \{\xi_{n_i+1}=+1, \xi_{n_t+2}=-1, ... , \xi_{n_i+l+1}=-1 \}.\nn
\end{eqnarray}
One can further verify that
\begin{eqnarray}
&& \P( \xi_{n_i+1}=-1, ... , \xi_{n_i+l}=-1,\xi_{n_i+l+1}=+1 |M_{n_i}=m) = \nonumber \\
 &&\,\,\,\,\,\,\,\,\,\,\,\,\,\,\, \,\,\,\,  \P( \xi_{n_i+1}=+1, \xi_{n_i+2}=-1, ... , \xi_{n_i+l+1}=-1|M_{n_i}=m ). \label{Pequal}
\end{eqnarray}

The second step is a {\it $l$}-shift of the $i$-th slice of the strip $S\times\{i\}$. Suppose $v_1^i, v_2^i, \dots, v_m^i$ is the sequence of vertices in $i$-th slice and $v_1^{i+1}, v_2^{i+1}, \dots, v_n^{i+1}$ on the $(i+1)$-th slice of the strip. After the first step the triangulation on $S\times\{i\}$ is a causal triangulation. The $l$-th shift is defined by the following transformation of the strip: any vertex $v_j^i$ in the $i$-th slice is shifted to the vertex $v_{j-l}^i$, i.e. any ``up" triangle $(v_j^i, v_{j+1}^i, v_k^{i+1})$ becomes a $(v_{j-l}^i, v_{j+1-l}^i, v_k^{i+1})$ triangle and any ``down" triangle $(v_{j^\prime}^i, v_{k^\prime}^{i+1}, v_{k^\prime+1}^{i+1})$ becomes a $(v_{j^\prime-l}^i, v_{k^\prime}^{i+1}, v_{k^\prime+1}^{i+1})$ triangle, where the sum in indices is the cyclic sum modulo $m$.

Note that the first step gives us already a causal triangulation. However, the marked edges can be anywhere in the slice. The second step deals with this problem. After the second step all dark blue triangles are connected to each other as marked triangles on the causal triangulations.

The inverse transformation is clear now. Consider the marked triangle of a causal triangulation in a strip $S\times [i,i+1]$. If the left-neighbor triangle is a ``down" triangle we do not change anything in this strip. Otherwise  we find the first left ``down" triangle $\Delta$, thus all ``up" triangles between the marked triangle and $\Delta$ are yellow triangles in the Figure 5. After that the transformation is obvious. This shows that there is an one-to-one correspondence between  $\mathcal{C}^g_{m_0}(t)$ and $\mathcal{C}^{c}_{m_0}(t)$ which completes the proof.
\end{proof}

We now have the following theorem:

\begin{theo} \label{thmbijUICT}
The growth process $M_n$ with $M_0=m_0$ stopped at times $n_t$ samples a section $T\in\mathcal{C}^c_{m_0}(t)$ of a UICT of height $t$ with initial boundary of length $m_0$, where $\mathcal{C}^c_{m_0}(t)$ is the image of $\mathcal{C}^g_{m_0}(t)$ under the transformation described in the proof of Lemma~\ref{bijection}.
\end{theo}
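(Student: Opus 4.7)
The approach is to compute the growth-process probability of a specific triangulation explicitly, then use Lemma~\ref{bijection} to push it forward to $\mathcal{C}^c_{m_0}(t)$, and finally compare with the marginal of the UICT measure $\pi_{m_0}$ on height-$t$ sections.

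First I would compute $\mathbb{P}(T_{n_t}=T\mid T_0=\mathcal{S}_{m_0})$ using Remark~\ref{yamb-th1}, which identifies this with the probability of the corresponding sequence $(M_0,M_1,\ldots,M_{n_t})$. The key observation is that whether the move is $(+)$ or $(-)$, the transition probability \eqref{yamb-tp1} can be rewritten uniformly as $\mathbb{P}(M_{n+1}\mid M_n)=M_{n+1}/(2M_n)$, so the product telescopes:
$$\mathbb{P}(T_{n_t}=T\mid T_0=\mathcal{S}_{m_0})=\prod_{k=0}^{n_t-1}\frac{M_{k+1}}{2M_k}=\frac{M_{n_t}}{2^{n_t}m_0}.$$
If $T$ has boundary lengths $(m_0,k_1,\ldots,k_t)$, then strip $j$ contains exactly $k_{j-1}+k_j$ triangles, so $n_t=m_0+2\sum_{j=1}^{t-1}k_j+k_t$ and $M_{n_t}=k_t$; both are determined by the boundary sequence alone. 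Lemma~\ref{bijection}, together with the identity \eqref{Pequal}, then transports this measure unchanged onto $\mathcal{C}^c_{m_0}(t)$: each $T'\in\mathcal{C}^c_{m_0}(t)$ with boundary sequence $(m_0,k_1,\ldots,k_t)$ receives probability $k_t/(2^{n_t}m_0)$ under the induced law.

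To identify this with the UICT marginal I would invoke the bijection between causal triangulations and forests. Lemma~\ref{vetv} and the Markov property of the size-biased critical Galton-Watson chain give the joint distribution of the boundary lengths,
$$\pi_{m_0}(k_1=m_1,\ldots,k_t=m_t)=\frac{m_t}{m_0}\prod_{j=0}^{t-1}\mathbb{P}_{GW}(\eta_{j+1}=m_{j+1}\mid\eta_j=m_j).$$
Since all critical Galton-Watson trees of fixed shape have equal weight, $\pi_{m_0}$ is uniform over causal triangulations with a fixed boundary sequence. Using that the number of causal triangulations of a strip with bottom length $m$ and top length $m'$ is $\binom{m+m'-1}{m'}$, which matches the combinatorial factor in $\mathbb{P}_{GW}(\eta'=m'\mid\eta=m)=\binom{m+m'-1}{m'}/2^{m+m'}$, the binomial coefficients cancel against the strip counts and I obtain $\pi_{m_0}(T')=k_t/(2^{n_t}m_0)$, which agrees with the pushforward computed above.

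The main obstacle lies in step three: one has to extract from the forest bijection both the explicit Markov joint law of the boundary lengths and the uniformity within a fixed boundary sequence, and then match the strip-triangulation count $\binom{m+m'-1}{m'}$ to the Galton-Watson combinatorial factor. By contrast, the telescoping in step one and the transfer through Lemma~\ref{bijection} are essentially immediate once the correct rewriting $\mathbb{P}(M_{n+1}\mid M_n)=M_{n+1}/(2M_n)$ is noticed.
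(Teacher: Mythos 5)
Your proposal is correct, and it takes a genuinely different route to the probability-matching step than the paper does. The paper's proof is terse: after invoking Lemma~\ref{bijection} and the probability-preservation identity \eqref{Pequal}, it matches the law of the growth process against the UICT by two local comparisons --- the per-vertex offspring distribution $p_k=2^{-k-1}$ and the slice-to-slice transition probability of Lemma~\ref{lm:probstrip}, which coincides with \eqref{sootn}. You instead compute the exact probability of each individual triangulation: the rewriting $\mathbb{P}(M_{n+1}\mid M_n)=M_{n+1}/(2M_n)$ telescopes to $M_{n_t}/(2^{n_t}m_0)$, which depends only on the boundary sequence, and you then match this pointwise against the explicit UICT weight obtained from the forest bijection, the Markov property of the size-biased Galton--Watson chain, and the strip count $\binom{m+m'-1}{m'}$. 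Your route buys the explicit formula $\pi_{m_0}(T')=k_t/(2^{n_t}m_0)$ and makes the uniformity of both measures within a fixed boundary sequence fully explicit, at the cost of the extra counting step you flag; the paper's route avoids that counting by delegating to Lemma~\ref{lm:probstrip}, but leaves the conditional uniformity largely implicit in the offspring-distribution remark. Both arguments rely on the same external input, namely the identification of the UICT marginal with the size-biased critical Galton--Watson forest. Two minor points: with the paper's convention $n_1=0$ the cylinder $C_t$ contains $t-1$ strips, so your formula $n_t=m_0+2\sum_{j=1}^{t-1}k_j+k_t$ needs the corresponding index shift; and the joint law of $(k_1,\dots,k_t)$ you use is a multi-time extension of Lemma~\ref{vetv} that should be cited from \cite{Durhuus:2009sm} or \cite{Sisko2011} rather than read off the one-dimensional statement. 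Neither affects the validity of the argument.
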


\begin{proof}
By the bijection of Lemma~\ref{bijection}, the growth process $M_n$ with $M_0=m_0$ stopped at times $n_t$ creates every possible causal triangulation of height $t$ with initial boundary of length $m$. Further, by \eqref{Pequal} removing the defects does not affect the probabilities. One has that the probability of each vertex at height $i$ having $k$ down triangles in $[i,i+1]$ attached to it is $p_k=1/2^{k+1}$.
One can also show that the probability to create a new slice with boundary length $m+k$ given that the preceding  boundary is of length $m$, is equal to the corresponding probability for the UICT. Indeed, we calculate the probability $\mathbb P (M_{n_{i+1}} = m+k \mid M_{n_i}=m)$ in Lemma \ref{lm:probstrip} later, yielding
\begin{equation}\label{y-prob}
 P (M_{n_{i+1}} = m+k \mid M_{n_i}=m) = \frac{m+k}{m} \frac{1}{2^{2m+k}} \binom{2m+k-1}{m-1}.\nn
\end{equation}
This completes the proof.
\end{proof}

\begin{rmk} One observes that in
\begin{equation}
     \mathbb P( (\pm)-\mbox{move} \mid l(T_n)=m ) = \frac{1}{2} \frac{m\pm 1}{m}.\nn
\end{equation}
the factor $1/2$ directly relates to the off-spring distribution $p_k=2^{-k-1}$, while the pre-factor, $(m\pm 1)/m$ results in the conditioning of the branching process on non-extinction.
\end{rmk}

\section{Growth rate and fractal dimension}\label{rateSec}

In this section we are interested in determining the growth rate of the process $\{M_n\}_{n\geq 0}$ and from this the fractal dimension $d_h$ of the (infinite) causal triangulation generated by this process.

Recall the definition of the moments $n_t, t=1, 2, \dots$ in Definition \ref{yamb-mom1}. By Remark \ref{yamb-th1} we can also obtain the moments $n_t$ from the Markov chain $\{M_n\}_{n\geq 0}$. 
%
%
\begin{lm}
\label{lm:defeq}
Let $n_1=0$. Suppose that $n_{t-1}$ is defined, then by Definition \ref{yamb-mom1} one has
\begin{eqnarray}
n_t&=&\min\{s: s-n_{t-1}=M_{s}+M_{n_{t-1}}\},\label{e:df1}\\
%
%
n_t&=&\min\Bigl\{s: \text{$s> n_{t-1}$ and $\sum_{k=n_{t-1}}^{s-1} \mathrm I \{\xi_k=-1\}=M_{n_{t-1}}$}\Bigr\}.\label{e:df2}
\end{eqnarray}
%
\end{lm}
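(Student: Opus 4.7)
The plan is to first observe that (\ref{e:df1}) and (\ref{e:df2}) are algebraically equivalent, and then to verify either one starting from Definition \ref{yamb-mom1} by counting $(-)$-moves performed between $n_{t-1}$ and $n_t$, relying on the phase-by-phase description of the strip construction given in the proof of Theorem \ref{yamb-th2}.

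For the equivalence, I would set
\begin{equation*}
N_+(s)=\sum_{k=n_{t-1}}^{s-1}\mathrm I\{\xi_k=+1\}, \qquad N_-(s)=\sum_{k=n_{t-1}}^{s-1}\mathrm I\{\xi_k=-1\}.
\end{equation*}
Since every move changes the boundary length by $\pm 1$, one has the two identities $M_s-M_{n_{t-1}}=N_+(s)-N_-(s)$ and $s-n_{t-1}=N_+(s)+N_-(s)$. Adding them, the condition $s-n_{t-1}=M_s+M_{n_{t-1}}$ in (\ref{e:df1}) reduces to $N_-(s)=M_{n_{t-1}}$, which is exactly (\ref{e:df2}); hence the two formulations define the same stopping time, and it suffices to prove (\ref{e:df2}).

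Next I would show that $N_-(n_t)=M_{n_{t-1}}$ and $N_-(s)<M_{n_{t-1}}$ for all $n_{t-1}<s<n_t$. The key observation is that a $(-)$-move, by construction, absorbs into the interior exactly one boundary vertex, namely the vertex $x_1$ immediately following the marked edge. At time $n_{t-1}$ the boundary consists of exactly $M_{n_{t-1}}$ vertices, all at distance $t-1$ from the $0$-root. Using the description of phases (i)--(iii) in the proof of Theorem \ref{yamb-th2}, one can check by induction on the number of moves after $n_{t-1}$ that the marked edge is at all intermediate times either an edge joining two vertices at distance $t-1$ (during the initial run of $(-)$-moves in phase (i)), or a diagonal edge joining a vertex at distance $t$ to a vertex at distance $t-1$ (from the first $(+)$-move onwards); in both situations the vertex following the marked edge on the boundary lies at distance $t-1$, with the sole exception of the last $(-)$-move, which attaches to $y_1$ and produces the marked edge $(y_{l_t},y_1)$ with both endpoints at distance $t$, i.e., triggers time $n_t$. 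Consequently each $(-)$-move in $[n_{t-1},n_t-1]$ absorbs one of the $M_{n_{t-1}}$ level-$(t-1)$ boundary vertices and the $M_{n_{t-1}}$-th such move is exactly the one that produces $e_{n_t}$, yielding (\ref{e:df2}).

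The main obstacle is precisely the inductive tracking of the marked edge through the three phases, i.e., ensuring that no $(-)$-move strictly before $n_t$ can ever absorb a vertex at distance $t$ (which would decouple the count of $(-)$-moves from the number of level-$(t-1)$ vertices absorbed and break the argument). Once this structural fact is in place, the minimality of $n_t$ in Definition \ref{yamb-mom1} translates directly into the minimality in (\ref{e:df2}), and the base case $n_1=0$ is just the stated convention.
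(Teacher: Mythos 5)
Your proposal is correct and follows essentially the same route as the paper: both arguments reduce the lemma to the fact that exactly $M_{n_{t-1}}$ $(-)$-moves occur between $n_{t-1}$ and $n_t$ (you count the level-$(t-1)$ boundary vertices absorbed one per $(-)$-move, the paper counts the ``up'' triangles covering the $M_{n_{t-1}}$ lower-boundary edges), and both derive the equivalence of \eqref{e:df1} and \eqref{e:df2} from the identities $N_+(s)+N_-(s)=s-n_{t-1}$ and $N_+(s)-N_-(s)=M_s-M_{n_{t-1}}$, which the paper phrases geometrically as hitting the line in Figure~\ref{yambar-fig1}. The only slip is that the condition in \eqref{e:df1} reduces to $N_-(s)=M_{n_{t-1}}$ by \emph{subtracting} your two identities rather than adding them; the conclusion is unaffected.
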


\begin{proof}
The proof is a direct consequence of Definition \ref{yamb-mom1} of the stopping times $\{n_t\}_{t=1,2, \dots}$. Let us suppose that $M_{n_{t-1}} = m$. Hence, to complete the next slice, we need to put $m$ ``up" triangles, considering that only the first ``up" triangle we add using the $(+)$-move, while the remaining ``up" triangles are added using $(-)$-moves. Note also that we complete the slice with the last $(-)$-move which adds a ``down" triangle. Thus, to fill the slice we  need exactly $m$ $(-)$-moves, and some (random) number of $(+)$-moves. This is provided exactly by \eqref{e:df2}: $n_t$ is the growth time of the $m$-th $(-)$-move after $n_{t-1}$. This shows that Definition \ref{yamb-mom1} implies \eqref{e:df2}.

We now show that \eqref{e:df1} is equivalent to \eqref{e:df2}. Equation~\eqref{e:df1} reflects the fact that the moment $n_t$ is the hitting time of the straight line $M_s=(s-n_{t-1})-M_{n_{t-1}}$ (the line $bdf$  in Figure~\ref{yambar-fig1}). To hit this line the process needs exactly $m$ steps with $\xi\!=\!-1$.  This proves the equivalence of the two equations.
\end{proof}

Let us consider a causal triangulation generated by the growth process $\{M_n\}_{n\geq 0}$ with initial boundary of length $M_0=1$ using the probabilities \eqref{yamb-tp1}. Let $\Gamma(t)$ denote the set of triangles of the corresponding triangulation with all vertices having graph distance less or equal than $t$ from the initial boundary. The fractal dimension $d_h$ describes the growth of $|\Gamma(t)|\sim t^{d_h}$ as $t\to\infty$. Observing that at instance $n_t$ of the growth process we have $|\Gamma(t)|=n_t$, we define the fractal dimension as the limit (if it exists)
\begin{equation}
d_h = \lim_{t\to\infty} \frac{\log n_t}{\log t}.\nn
\end{equation}

\begin{figure}[t]
\begin{center}
\includegraphics[width=11cm]{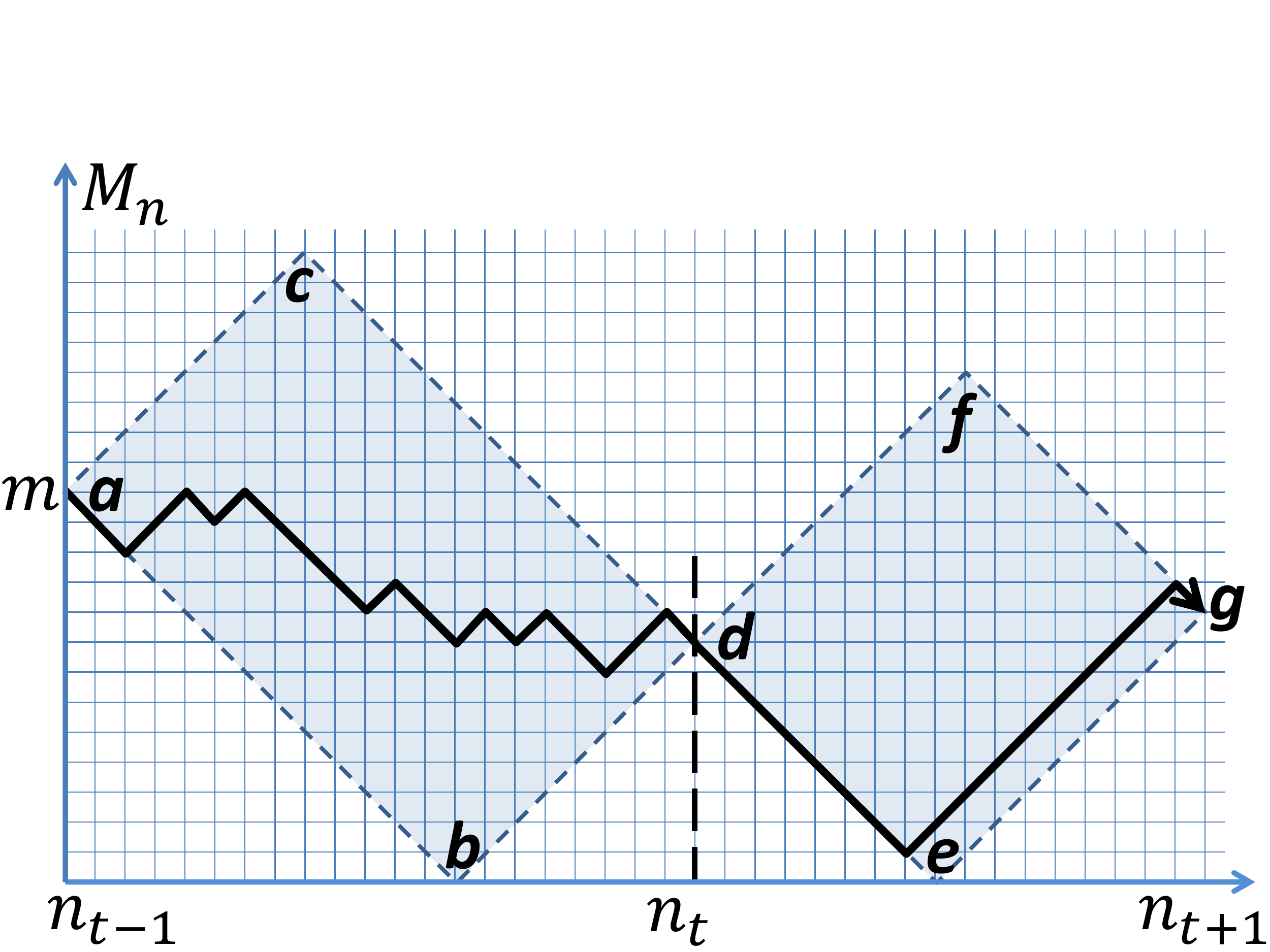}
\end{center}
\caption{Illustration of a path of the Markov chain $\{M\}_{n=0,1,2,...}$.}%
\label{yambar-fig1}
\end{figure}

We now want to prove that the fractal dimension of an (infinite) causal triangulation generated by the growth process $\{M_n\}_{n\geq 0}$ with probabilities \eqref{yamb-tp1} is almost surely 2. To do so we first prove the following slightly stronger statement:

\begin{prop}\label{theo:nt}
For almost every trajectory $\omega$ of the growth process $\{M_n\}_{n\geq 0}$ with probabilities \eqref{yamb-tp1} and with initial boundary of length $M_0=1$, there exist two constants $\gamma_1=\gamma_1(\omega)>0$ and $\gamma_2=\gamma_2(\omega)>0$ such that
\begin{equation}\label{in1}
    \gamma_1\frac{t^2}{\log^2 t} \le n_t \le \gamma_2 t^2 \log^2 t 
\end{equation}
\end{prop}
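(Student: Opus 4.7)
The plan is to reduce the proposition to an almost-sure two-sided bound on the post-slice boundary lengths $\mu_t:=M_{n_t}$. By Lemma \ref{lm:defeq}, $n_t-n_{t-1}=\mu_t+\mu_{t-1}$ for $t\ge 2$, which telescopes to
$$
n_t \;=\; \mu_1+\mu_t+2\sum_{s=2}^{t-1}\mu_s,
$$
so \eqref{in1} is equivalent to the almost sure bounds $c_1\,t^2/\log^2 t\le\sum_{s\le t}\mu_s\le c_2\,t^2\log^2 t$. By Theorem \ref{thmbijUICT} and Lemma \ref{vetv}, the chain $\{\mu_t\}_{t\ge 1}$ started at $\mu_1=1$ has the law of the critical Galton--Watson process with offspring $p_k=2^{-k-1}$ size-biased at each generation (equivalently, conditioned on non-extinction). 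Since the offspring generating function $f(z)=1/(2-z)$ is linear-fractional, its iterates admit an explicit form, from which one obtains
$$
\mathbb{P}(\mu_s\ge K\mid \mu_1=1) \;=\; \Bigl(\tfrac{s}{s+1}\Bigr)^{K-1}\,\tfrac{K+s}{s+1},
$$
which yields both the Yaglom scaling limit $\mathbb{P}(\mu_s\ge ys)\to(1+y)e^{-y}$ (a $\operatorname{Gamma}(2,1)$ tail) and uniform-in-$s$ one-point bounds.

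For the upper bound I would specialise the tail at $K=a\,t\log t$, giving $\mathbb{P}(\mu_t\ge a\,t\log t)\le C(\log t)\,t^{-a}$; choosing $a=3$ makes this summable in $t$, and Borel--Cantelli yields $\mu_t\le 3t\log t$ for all $t$ large, almost surely. Summation then produces $\sum_{s\le t}\mu_s\le C\,t^2\log t$, hence $n_t\le\gamma_2\,t^2\log^2 t$ a.s.

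The lower bound is the real obstacle, because no pointwise estimate $\mu_t\ge c\,t/\log t$ can hold at every $t$: the size-biased chain returns close to its spine size one infinitely often. My approach is instead to count typical scales. Expanding $(1+y)e^{-y}=1-y^2/2+O(y^3)$ at $y=1/(3\log s)$ and using the uniform tail above gives $\mathbb{P}(\mu_s<s/(3\log s))\le C/\log^2 s$; setting $B_t:=\{s\in[t/2,t]:\mu_s\ge s/(3\log s)\}$, this shows $\mathbb{E}|B_t|\ge t/4$ for $t$ large. By Cauchy--Schwarz applied to the one-point variances $\operatorname{Var}(\mathbf 1_{\{\mu_s\ge s/(3\log s)\}})\le C/\log^2 s$, we obtain the crude bound $\operatorname{Var}(|B_t|)\le C\,t^2/\log^2 t$, and Chebyshev then yields $\mathbb{P}(|B_t|\le t/8)\le C/\log^2 t$. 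This tail is not summable in $t$, which is the key technical point I expect to have to address; along the dyadic subsequence $t=2^k$ it is summable, and Borel--Cantelli provides $|B_{2^k}|\ge 2^k/8$ a.s.\ for $k$ large. Since each $s\in B_{2^k}$ contributes at least $2^{k-1}/(3k\log 2)$ to the partial sum, we get $\sum_{s\le 2^k}\mu_s\ge c\cdot 4^k/k\asymp(2^k)^2/\log(2^k)$, and by monotonicity of the partial sums this transfers to $n_t\ge\gamma_1 t^2/\log^2 t$ for all $t$ large. Sharpening the Chebyshev estimate to be summable along all of $\mathbb{N}$, rather than only dyadically, would require replacing Cauchy--Schwarz with a direct covariance computation, which the size-biased representation together with the negative-binomial transitions of Theorem \ref{thmbijUICT} should make feasible.
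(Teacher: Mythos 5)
Your argument is correct in outline and reaches the stated bounds (indeed slightly sharper ones, $n_t\le \gamma_2 t^2\log t$ and $n_t\ge \gamma_1 t^2/\log t$), but by a genuinely different route. You work entirely with the slice chain $\mu_t=M_{n_t}$, identify it via Lemma~\ref{lm:probstrip} as the size-biased critical geometric Galton--Watson chain, and exploit the explicit linear-fractional iterates to get exact one-point tails $\P(\mu_s\ge K)=\bigl(\tfrac{s}{s+1}\bigr)^{K-1}\tfrac{K+s}{s+1}$; Borel--Cantelli then handles the upper bound, and a second-moment estimate on the counting set $B_t$, applied along the dyadic subsequence and transferred to all $t$ by monotonicity of $n_t$, handles the lower bound. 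The paper instead stays in growth time: it shows $M_n=o(\sqrt{n\log n})$ a.s.\ via the martingale $M_n^2-3n$ (Lemma~\ref{lm:sn21}), shows $M_n-\sum_{i\le n}1/M_i=o(\sqrt{n}\log n)$ a.s.\ (Lemma~\ref{lm:sn22}), and combines these with the purely deterministic inequality $\max_{n_t<i\le n_{t+1}}M_i\le n_{t+1}-n_t$ (Lemma~\ref{lm:sn23}), which forces $\sum_{i\le n_t}1/M_i\ge t$ and hence the lower bound without any Borel--Cantelli or variance computation. What your approach buys is probabilistic transparency and sharper constants, at the cost of importing the branching-process marginals (note this creates no circularity: Lemma~\ref{lm:probstrip} is proved independently of Proposition~\ref{theo:nt}); what the paper's approach buys is self-containedness within the growth process and avoidance of the concentration step that you correctly identify as the delicate point. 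Two small remarks: your closing worry about making the Chebyshev tail summable over all of $\N$ is moot --- the dyadic Borel--Cantelli plus monotonicity of $t\mapsto n_t$ already closes the argument, since $t^2/\log^2 t$ changes only by a bounded factor between consecutive dyadic scales; and the crude bound $|\mathrm{Cov}(\mathbf 1_s,\mathbf 1_{s'})|\le\sqrt{\mathrm{Var}(\mathbf 1_s)\mathrm{Var}(\mathbf 1_{s'})}$ is valid as stated and suffices, so no covariance computation is actually required.
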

Note that in the proposition the constants $\gamma_1$ and $\gamma_2$ depend on the whole trajectory $\omega$ of the process, but not on $t$. Using the definition of the fractal dimension we then have the following Theorem:

\begin{theo}\label{theo:fractal}
An infinite causal triangulation generated by the growth process $\{M_n\}_{n\geq 0}$ with probabilities \eqref{yamb-tp1} with initial boundary of length $M_0=1$ has fractal dimension $d_h=2$ almost surely.
\end{theo}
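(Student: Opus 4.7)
The plan is to deduce Theorem \ref{theo:fractal} directly from Proposition \ref{theo:nt} by applying the monotone function $\log(\cdot)/\log t$ to the two-sided bound. Since Proposition \ref{theo:nt} furnishes, for almost every trajectory $\omega$, positive random constants $\gamma_1(\omega), \gamma_2(\omega)$ with
\begin{equation}
\gamma_1\frac{t^2}{\log^2 t} \le n_t \le \gamma_2 t^2 \log^2 t \nn
\end{equation}
for all sufficiently large $t$, I would take the logarithm on all three sides and divide by $\log t$ to sandwich $\log n_t / \log t$ between two explicit deterministic expressions (plus $\omega$-dependent constants).

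Concretely, the next step is to observe that the lower bound yields
\begin{equation}
\frac{\log n_t}{\log t} \ge \frac{\log \gamma_1(\omega)}{\log t} + 2 - \frac{2\log\log t}{\log t},\nn
\end{equation}
while the upper bound gives
\begin{equation}
\frac{\log n_t}{\log t} \le \frac{\log \gamma_2(\omega)}{\log t} + 2 + \frac{2\log\log t}{\log t}.\nn
\end{equation}
As $t\to\infty$, the $\omega$-dependent terms $\log\gamma_i(\omega)/\log t$ vanish because $\gamma_i(\omega)$ is a fixed (finite, positive) constant along the trajectory, and the terms $\log\log t/\log t$ also vanish. Hence both bounds tend to $2$ on the same full-probability event on which Proposition \ref{theo:nt} holds, so $\lim_{t\to\infty}\log n_t/\log t = 2$ almost surely, which is exactly the definition of $d_h = 2$ given just before the proposition.

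There is essentially no obstacle to this step; all the analytic work has been done in Proposition \ref{theo:nt}, and the polylogarithmic slack in \eqref{in1} is precisely what is needed (and more than enough) to be killed by the denominator $\log t$. The only care needed is to note that the proposition gives the bounds for all large $t$ on a single event of probability one (not a different event for each $t$), which is what allows us to conclude an almost sure limit for the ratio $\log n_t / \log t$ rather than, for example, only a limit in probability.
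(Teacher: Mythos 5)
Your proposal is correct and is exactly the argument the paper intends: the paper's proof of Theorem \ref{theo:fractal} simply says it follows immediately from Proposition \ref{theo:nt}, and your log-and-divide computation is the (routine) verification of that claim, including the correct observation that the polylogarithmic corrections and the $\omega$-dependent constants are annihilated by the $\log t$ denominator on a single full-measure event.
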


\begin{proof} The proof follows immediately from the previous Proposition \ref{theo:nt}.
\end{proof}

This Theorem is analogous to a result by Durhuus, Jonsson and Wheater (Theorem 3 in \cite{Durhuus:2009sm}) which is derived for UICT using the bijection to critical Galton-Watson processes conditioned to never die out. In this construction, subsequent generations in the branching process correspond to vertices of subsequent slices of fixed minimal graph distance, i.e.\ geodesic distance, from the initial boundary. This is in contrast to the construction through the growth process where the triangulation is grown triangle by triangle. One can think of both constructions as being dual to each other in the sense that in the branching process picture geodesic distance is fixed and area growth is estimated while in the growth process area is fixed and geodesic distance is estimated. We will comment further on this duality in Section \ref{conclusion}.

To prove Proposition~\ref{theo:nt} we need three lemmas of which the first two are proven in Appendix \ref{appA1}.
\begin{lm}
We have
\label{lm:sn21}
\begin{equation}
\frac{M_n}{\sqrt{n\log n}} \to 0 \quad \text{a.s.}\nn
\end{equation}
\end{lm}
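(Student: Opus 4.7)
The plan is to leverage the martingale structure of $\{M_n\}$. The key observation is that $N_n := M_n^2 - 3n$ is a martingale. Indeed, since $\xi_n \in \{\pm 1\}$ gives $\xi_n^2 \equiv 1$ and, from \eqref{yamb-tp1},
$$\E[\xi_n \mid M_n = m] = \frac{1}{2}\frac{m+1}{m} - \frac{1}{2}\frac{m-1}{m} = \frac{1}{m},$$
we get $\E[M_{n+1}^2 - M_n^2 \mid \F_n] = 2 M_n \cdot (1/M_n) + 1 = 3$. In particular $\E[M_n^2] = M_0^2 + 3n$, so $M_n$ is of order $\sqrt{n}$ in the mean-square sense; the task is to upgrade this to an almost sure $o(\sqrt{n\log n})$ statement.

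The next step is to control the second moment of the martingale $N_n$. The increment $N_{n+1} - N_n = 2 M_n \xi_n - 2$ has conditional second moment $\E[(N_{n+1}-N_n)^2 \mid \F_n] = 4(M_n^2 - 1)$; summing the orthogonal martingale differences together with $\E[M_k^2] = O(k)$ yields $\E[N_n^2] \le C n^2$. Doob's $L^2$ maximal inequality then produces
$$\P\Bigl(\max_{k\le n}\bigl|M_k^2 - 3k\bigr| > \lambda\Bigr) \le \frac{4\,\E[N_n^2]}{\lambda^2} \le \frac{C' n^2}{\lambda^2}.$$

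The remainder of the proof is a dyadic Borel--Cantelli argument. Choose $n = 2^j$ and $\lambda = \epsilon\, 2^j \log 2^j$ for a fixed $\epsilon > 0$; the right-hand side becomes $O(1/j^2)$, summable in $j$, so almost surely there exists a random $J(\omega)$ with $\max_{k\le 2^j}|M_k^2 - 3k| \le \epsilon\, 2^j \log 2^j$ for every $j \ge J(\omega)$. For any $k \in [2^{j-1}, 2^j]$ with $j \ge J(\omega)$, this implies $M_k^2 \le 3k + \epsilon\, 2^j \log 2^j \le (3 + 2\epsilon \log k)\, k$, so $\limsup_k M_k^2/(k\log k) \le 2\epsilon$ a.s. Letting $\epsilon \downarrow 0$ along a countable sequence gives $M_n / \sqrt{n\log n} \to 0$ a.s.

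The only step that is not purely mechanical is spotting the exact cancellation of drift and variance that makes $N_n = M_n^2 - 3n$ a martingale; once this is in hand, the rest is standard second-moment bookkeeping, and the interpolation between dyadic times is essentially free because $|M_{k+1} - M_k| = 1$ so the maximum over $[2^{j-1}, 2^j]$ is already controlled by the maximal inequality on $[0, 2^j]$.
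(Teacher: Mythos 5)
Your proof is correct. The central device is exactly the one the paper uses: you identify the martingale $X_n=M_n^2-3n$ (via $\E[\xi_n\mid\F_n]=1/M_n$ and $\xi_n^2\equiv1$), compute its conditional squared increment $4(M_n^2-1)$, and use $\E M_n^2=M_0^2+3n$ to control the growth of its quadratic variation. Where you diverge is only in the passage from these $L^2$ bounds to the almost sure statement. The paper forms the weighted martingale transform $B_n=\sum_{m\le n}(X_m-X_{m-1})/a_m$ with $a_m=m\log m$, checks $\sup_n\E B_n^2<\infty$, invokes the $L^2$ martingale convergence theorem, and finishes with Kronecker's lemma to get $X_n/(n\log n)\to0$ a.s. You instead bound $\E[X_n^2]\le Cn^2$ by orthogonality of increments, apply Doob's $L^2$ maximal inequality over $[0,2^j]$, and run a dyadic Borel--Cantelli argument with $\lambda=\epsilon\,2^j\log 2^j$; the tail is $O(1/j^2)$, summable, and interpolation to general $k\in[2^{j-1},2^j]$ is free. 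Both routes are standard martingale strong-law arguments and both are complete here; yours has the small advantage of delivering a maximal-function version of the bound over dyadic blocks for free, while the paper's Kronecker route avoids the block-interpolation bookkeeping and reuses the same template verbatim for Lemma~\ref{lm:sn22} with $a_m=\sqrt{m}\log m$. All the individual computations you state (the drift $3$, the increment $2M_n\xi_n-2$, the conditional variance $4(M_n^2-1)$, the summability of $1/j^2$, and the absorption of the $3k$ term) check out.
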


\begin{lm}
We have
\label{lm:sn22}
\begin{equation}
\frac{M_n-\sum_{i=0}^n \frac{1}{M_i}}{\sqrt{n}\log n} \to 0 \quad \text{a.s.}\nn
\end{equation}
\end{lm}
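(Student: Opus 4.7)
The approach I would take is a martingale decomposition of the drifted random walk $M_n$. The transition probabilities \eqref{yamb-tp1} give the conditional drift
\[
\mathbb{E}[\xi_n \mid \mathcal{F}_n] \;=\; \frac{1}{2}\frac{M_n+1}{M_n} - \frac{1}{2}\frac{M_n-1}{M_n} \;=\; \frac{1}{M_n},
\]
where $\mathcal{F}_n$ is the natural filtration of $\{M_n\}$. Setting $D_i := \xi_i - 1/M_i$, the partial sums
\[
S_n \;:=\; \sum_{i=0}^{n-1} D_i \;=\; M_n - M_0 - \sum_{i=0}^{n-1} \frac{1}{M_i}
\]
thus form a zero-mean martingale whose differences are uniformly bounded, $|D_i|\le |\xi_i|+1/M_i\le 2$, since $M_i\ge 1$.

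The main step is then to apply the Azuma--Hoeffding inequality to $S_n$: for every $\epsilon>0$,
\[
\mathbb{P}\bigl(|S_n|\ge \epsilon\sqrt{n}\log n\bigr) \;\le\; 2\exp\!\left(-\frac{\epsilon^2\log^2 n}{8}\right),
\]
and the right-hand side is summable in $n$. By the first Borel--Cantelli lemma, the event $\{|S_n|\ge \epsilon\sqrt{n}\log n\}$ occurs only finitely often almost surely; letting $\epsilon\downarrow 0$ along a countable sequence yields $S_n/(\sqrt{n}\log n)\to 0$ a.s.

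To match the precise formulation of the lemma, I would finally observe that
\[
M_n - \sum_{i=0}^{n}\frac{1}{M_i} \;=\; S_n + M_0 - \frac{1}{M_n},
\]
and the residual $O(1)$ correction $M_0 - 1/M_n$ is trivially absorbed by the $\sqrt{n}\log n$ normalisation. I do not foresee any genuine obstacle: once the martingale structure of $\xi_i - 1/M_i$ is recognised, the rest is a textbook Azuma--Hoeffding / Borel--Cantelli combination. The only real subtlety is that the normalisation $\sqrt n \log n$ sits just on the summable side of the Gaussian tail bound; a weaker normalisation like $\sqrt n$ would leave $\exp(-\epsilon^2/8)$ non-summable, which is presumably why the lemma is stated with the extra $\log n$ factor.
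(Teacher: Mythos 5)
Your proof is correct, but it takes a genuinely different route from the paper's. You both start from the same martingale structure: the conditional drift computation $\E[\xi_n\mid\F_n]=1/M_n$ makes $X_n=M_n-\sum_{i\le n}1/M_i$ (your $S_n$, up to the harmless $O(1)$ boundary terms you correctly account for) a martingale. From there the paper does \emph{not} use concentration: it forms the weighted martingale $B_n=\sum_{m\le n}(X_m-X_{m-1})/a_m$ with $a_m=\sqrt{m}\log m$, shows $\sup_n\E B_n^2<\infty$ from the second-moment bound $\E(X_{m}-X_{m-1})^2\le 4$, invokes the $L^2$ martingale convergence theorem to get a.s.\ convergence of $B_n$, and then applies Kronecker's lemma. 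You instead exploit the stronger fact that the increments $\xi_i-1/M_i$ are \emph{uniformly} bounded by $2$ and run Azuma--Hoeffding plus Borel--Cantelli; your tail bound $2\exp(-\epsilon^2\log^2 n/8)=2n^{-\epsilon^2\log n/8}$ is indeed summable, and the $\epsilon\downarrow 0$ step along a countable sequence is fine. Your approach buys exponential concentration and would in fact tolerate a normalisation as weak as $\sqrt{n}\,(\log n)^{1/2+\delta}$; the paper's approach buys uniformity of method, since the same $L^2$/Kronecker argument is reused for Lemma~\ref{lm:sn21}, where the martingale $M_n^2-3n$ has increments of order $M_n$ that are \emph{not} uniformly bounded, so Azuma--Hoeffding would not apply there without extra work. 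Your closing remark about the normalisation is slightly off as a diagnosis of the lemma's statement --- the $\log n$ is what the paper's second-moment method naturally produces via the choice $a_m=\sqrt{m}\log m$, not a tightness constraint from Gaussian tails --- but this does not affect the validity of your argument.
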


\begin{lm}
\label{lm:sn23}
For any $t$, we have
\begin{equation}
\max_{n_{t}<i\le n_{t+1}} M_i \le n_{t+1}-n_{t} .\nn
\end{equation}
\end{lm}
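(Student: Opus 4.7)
The plan is to exploit the characterization of $n_{t+1}$ given by Lemma~\ref{lm:defeq} together with the trivial $\pm 1$ decomposition of the increments of $M$ into $(+)$- and $(-)$-moves. The key observation is that within a single strip the number of $(-)$-moves is rigidly fixed by $M_{n_t}$, while the number of $(+)$-moves is free; this asymmetry will pin down both the length $n_{t+1}-n_t$ of the strip and an upper bound on the height the walk can reach inside it.

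First I would set $m := M_{n_t}$ and apply \eqref{e:df2} of Lemma~\ref{lm:defeq}: the number of indices $j\in\{n_t,\dots,n_{t+1}-1\}$ with $\xi_j=-1$ is exactly $m$. Denoting by $k$ the number of indices in the same window with $\xi_j=+1$, the total length of the window equals
\[
n_{t+1}-n_t \;=\; m+k.
\]

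Next, for arbitrary $i\in(n_t,n_{t+1}]$ I would introduce the partial counts $p_i$ and $q_i$ of $(+)$- and $(-)$-moves occurring at the indices $n_t,\dots,i-1$, so that $p_i+q_i=i-n_t$ and
\[
M_i \;=\; m + p_i - q_i.
\]
Monotonicity of the counts gives $q_i\ge 0$ and $p_i\le k$, and combining these inequalities yields
\[
M_i \;\le\; m + p_i \;\le\; m + k \;=\; n_{t+1}-n_t.
\]
Taking the maximum over $n_t<i\le n_{t+1}$ then gives the lemma.

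I do not anticipate any real obstacle: the entire argument rests on the structural identity supplied by Lemma~\ref{lm:defeq} (which fixes the number of $(-)$-moves per strip) together with the fact that the increments of $M$ are $\pm 1$. Conceptually, the bound merely reflects that the best the walk can do inside the strip is a monotone rise of height $k$ above its starting level $m$, after which it must spend all $m$ of its mandated $(-)$-moves.
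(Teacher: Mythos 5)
Your proof is correct and follows essentially the same route as the paper: both arguments rest on Lemma~\ref{lm:defeq} and the fact that the walk can rise above $M_{n_t}$ by at most the number of $(+)$-moves available in the strip. The paper phrases this geometrically (the trajectory is confined to a rectangle whose top corner has height $M_{n_t}+M_{n_{t+1}}=n_{t+1}-n_t$ by \eqref{e:df1}), while you make the same bound explicit by counting partial $(+)$- and $(-)$-moves via \eqref{e:df2}; the two are interchangeable.
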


\begin{proof}[Proof of Lemma~\ref{lm:sn23}]
The proof follows directly from the Figure~\ref{yambar-fig1}. For any given $M_{n_{t}}$, $M_{n_{t+1}}$, $n_{t}$ and $n_{t+1}$ the trajectory of the growth process belongs to the rectangle $defg$ in Figure~\ref{yambar-fig1}. This means that the maximal accessible point is $f$. Its $y$-coordinate is equal to $M_{n_{t}}+M_{n_{t+1}}$. Thus
\[
\max_{n_{t}<i\leq  n_{t+1}} M_i \leq M_{n_{t}}+M_{n_{t+1}},
\]
for any trajectory of the growth process.
By \eqref{e:df1} of Lemma \ref{lm:defeq}, the RHS is equal to $n_{t+1}-n_t$. This proves the Lemma.
\end{proof}

\begin{proof}[Proof of Proposition~\ref{theo:nt}] 
First let us prove the RHS of (\ref{in1}).
From Lemma~\ref{lm:sn21} it follows that for almost every trajectory $\omega$ of the growth process
there exists a constant $C_1= C_1(\omega) >0$ (in the following we will omit $\omega$ from our notations) such that, for any $n$,
\begin{equation}\label{in2}
    M_n \le C_1 \sqrt{n\log n}
\end{equation}
Using \eqref{e:df1} and \eqref{in2}, we get
\begin{eqnarray}
n_t&=&n_t-n_0=\sum_{k=0}^{t-1} ( n_{k+1}-n_k )
=\sum_{k=0}^{t-1} (M_{n_{k+1}}+M_{n_k} )\nonumber\\
&\le& 2\sum_{k=1}^{t} M_{n_k}
\le 2C_1\sum_{k=1}^{t} \sqrt{n_k \log n_k}
\le 2C_1t\sqrt{n_t \log n_t}.\label{in3}
\end{eqnarray}
From this and using the inequality $\log n< \sqrt{n}$ we also have
\[
4C_1^2t^2\ge \frac{n_t}{\log n_t} \ge \frac{n_t}{ \sqrt{n_t} } = \sqrt{n_t}
\]
and thus,
\begin{equation}\label{in31}
n_t \le 16C_1^4t^4.
\end{equation}
From \eqref{in3} and \eqref{in31} it follows that there exists a constant $\gamma >0$ such that
\[
n_t \le  4C_1^2t^2\log n_t \le \gamma t^2 \log t,
\]
where we use \eqref{in3} for the first inequality and \eqref{in31} for the second one. Note also that the constant $\gamma$ depends on the whole trajectory of the process, but does not depend on $t$. Thus, the RHS of (\ref{in1}) is proved.

Now let us prove the LHS of  (\ref{in1}). 
On the one hand, using Lemma~\ref{lm:sn23}, one has
\begin{equation}\label{in5}
\begin{split}
\sum_{i=n_1}^{n_t} \frac{1}{M_i}
&\ge
\sum_{k=1}^{t-1}\sum_{i=n_k+1}^{n_{k+1}} \frac{1}{M_i}\\
&\ge
\sum_{k=1}^{t-1} \frac{n_{k+1}-n_k}{\max_{n_k<i\le n_{k+1}}M_i}
\ge
\sum_{k=1}^{t-1} \frac{n_{k+1}-n_k}{n_{k+1}-n_k}
\ge t .
\end{split}
\end{equation}
On the other hand, using Lemma~\ref{lm:sn22} it follows that there exists a $C_2= C_2(\omega)>0$ such that
$$
\sum_{i=n_1}^{n_t} \frac{1}{M_i} - M_{n_t}
\le C_2 \sqrt{n_t}{\log n_t},
$$
and from \eqref{in2} it follows that there exists a $C_3>0$ such that
\begin{equation}\label{in6}
\sum_{i=n_1}^{n_t} \frac{1}{M_i}
\le C_3 \sqrt{n_t}{\log n_t}.
\end{equation}
Combining \eqref{in5} and \eqref{in6}, we get
\begin{equation}\label{in7}
t \le C_3 \sqrt{n_t}{\log n_t}.
\end{equation}
Finally, lifting the square in \eqref{in7} and using \eqref{in31}, we get the LHS of (\ref{in1}).
\end{proof}

\begin{rmk} \label{rmktime}
From the proof of Proposition~\ref{theo:nt} we observe that
\[
t\leq \sum_{i=1}^{n_t} \frac{1}{M_i}.
\]
In fact, as we will see in the next section, the properly rescaled inverse of $n_t$, i.e.\ $t_n := \min \{s: n_s \geq n \}$ and $\sum_{i=1}^{n}1/M_i$ converge to the same limiting process $T_u = \int_0^u \frac{1}{2M_s} ds$.
\end{rmk}

\section{Weak convergence results}\label{convergenceSec}

Consider the rescaled process
\begin{equation}
M^{(n)}_{u}=\frac{M_{[un]}}{\sqrt{n}}, \quad 0\leq u <\infty.\nn
\end{equation}

\begin{theo} \label{thmlength}
For an infinite causal triangulation generated by the growth process $\{M_n\}_{n\geq 1}$ with initial boundary of length $M_0/\sqrt{n} \to l\geq 0$ we have
\[
M^{(n)}_{u} \codi M_u, \quad 0\leq u <\infty,
\]
as $n\to\infty$ in the sense of weak convergence on the functions space $D[0,\infty)$, where the continuous process is diffusive and solves the following It\^o's equation
\[
dM_u=\frac{1}{M_u}du + dB_u, \quad M_0=l,
\]
with $B_u$ being standard Brownian motion of variance one.
\end{theo}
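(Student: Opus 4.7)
The plan is to apply the standard diffusion-approximation machinery for Markov chains to $\{M_n\}$. From \eqref{yamb-tp1} one computes immediately
$$\E[\xi_n \mid M_n = m] = \frac{1}{m}, \qquad \E[\xi_n^2 \mid M_n = m] = 1,$$
so that the Doob decomposition reads $M_n = M_0 + \sum_{k=0}^{n-1} M_k^{-1} + N_n$, with $N_n$ a square-integrable martingale whose predictable bracket is $\langle N\rangle_n = \sum_{k=0}^{n-1}(1 - M_k^{-2})$. Rescaling at time $[un]$ by $\sn$ gives
$$M^{(n)}_u = M^{(n)}_0 + \frac{1}{\sn}\sum_{k=0}^{[un]-1}\frac{1}{M_k} + \frac{N_{[un]}}{\sn}.$$

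The first term tends to $l$ by hypothesis. Writing $M_k^{-1} = (\sn\, M^{(n)}_{k/n})^{-1}$, the second term is a Riemann sum for $\int_0^u ds/M^{(n)}_s$, with error vanishing in probability on compacts provided $M^{(n)}$ stays bounded away from $0$ with high probability on bounded intervals. For the last term, $\langle N^{(n)}\rangle_u := n^{-1}\langle N\rangle_{[un]} \to u$ in probability and the jumps of $N^{(n)}$ are of size $n^{-1/2}$, so Rebolledo's martingale functional CLT (e.g.\ Ethier-Kurtz Thm.\ 7.1.4) yields $N_{[un]}/\sn \codi B_u$ in $D[0,\infty)$, where $B$ is standard Brownian motion.

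Tightness of $\{M^{(n)}\}$ in $D[0,\infty)$ follows from Aldous's criterion applied to the decomposition above. Any weak subsequential limit $M$ is continuous and, by passing to the limit in the three displayed pieces, inherits the identity $M_u = l + \int_0^u ds/M_s + B_u$, i.e.\ it solves the claimed SDE. This SDE is that of a three-dimensional Bessel process (Lamperti's representation): for $l > 0$ pathwise uniqueness holds by Yamada-Watanabe and the process almost surely never reaches $0$, so uniqueness of the limit law propagates to convergence of the whole sequence.

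The main obstacle is controlling the singular drift near $M \sim 0$, i.e.\ showing that $M^{(n)}$ does not spend a non-negligible fraction of time at values of $M_k$ of order one (which would contribute drift of order $\sn$ per unit of $u$ and destroy the Riemann-sum approximation). This can be handled by exploiting the branching-process representation of Lemma~\ref{vetv}: a critical Galton-Watson process conditioned on non-extinction has generation sizes of order $\sn$ at time $n$ with appropriate uniform integrability, which translates into hitting-time bounds for $M^{(n)}$ on neighbourhoods of $0$. Alternatively one may use a Lyapunov argument exploiting the positive drift $\E[\xi_n\mid M_n]=1/M_n$, which repels the chain from $1$. The boundary case $l = 0$ is then recovered by continuity and the entrance law of the Bessel process at $0$.
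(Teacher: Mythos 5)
Your proposal is correct and follows essentially the same route as the paper: both reduce the claim to the computation of the infinitesimal drift $b(x)=1/x$ and variance $\sigma^2(x)=1$ together with well-posedness of the martingale problem $\MP{1/x}{1}$ (the Bessel(3) SDE). The only difference is that you unpack the standard Markov-chain-to-diffusion convergence theorem by hand (Doob decomposition, martingale FCLT, Aldous tightness, Yamada--Watanabe), whereas the paper simply verifies the hypotheses of Durrett's Theorem 8.7.1 and cites it; your extra care about the singular drift near $0$ addresses a point the paper passes over by asserting well-posedness.
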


\begin{proof}
Consider the process $M^{(n)}_{u}=M_{[un]}/\sqrt{n}$ as defined above. Using the notation of Appendix  \ref{appB2} with $h=1/n$, we now set $Y_{m/n}^{(n)}=M_{m}/\sqrt{n}$, where the state space is $\mathcal{X}_{1/n}=\{k/\sqrt{n},0<k\leq n\}$. From the transition probability of the Markov chain $M_n$ as discussed in Section \ref{growthSec} one has
\[
p^{1/n}(x,x+1/\sqrt{n})=\frac{1}{2}\frac{x+1/\sqrt{n}}{x}, \quad p^{1/n}(x,x-1/\sqrt{n})=\frac{1}{2}\frac{x-1/\sqrt{n}}{x},
\]
for any $x\in \mathcal X_{1/n}$.
We now want to apply Theorem \ref{convbook}. It is easy to first check condition $(3)$. From the transition probabilities one immediately sees that for $\epsilon>1/\sqrt{n}$, $\Delta_{1/n}^\epsilon (x)=0$ for all $x$ and thus condition $(3)$ holds. Further,
\begin{eqnarray}
b_{1/n}(x)&=& \frac{n}{2} \left(\frac{1}{\sqrt{n}} \frac{x+1/\sqrt{n}}{x}-\frac{1}{\sqrt{n}} \frac{x-1/\sqrt{n}}{x} \right)= 1/x  \nonumber\\
\sigma^2_{1/n}(x)&=& \frac{n}{2} \left(\frac{1}{n} \frac{x+1/\sqrt{n}}{x}+\frac{1}{n} \frac{x-1/\sqrt{n}}{x} \right)= 1  \nonumber
\end{eqnarray}
This shows that the conditions $(1)$ and $(2)$ hold for $b(x)=1/x$ and $\sigma^2(x)=1$. Further,  one observes that $MP({1/x},{1})$ is a well-posed martingale problem (see Definition \ref{MPdef}). Hence by Theorem \ref{convbook}, setting $X_t^{(n)}= Y^{(n)}_{[nt]/n}=M_{[t n]}/\sqrt{n}$ one obtains the desired result.
\end{proof}

We have shown convergence of the rescaled boundary of our growth process to a limiting diffusion. Here, the diffusion time $u$ is the time associated to the growth process. In the following we would like to relate this to the corresponding diffusion process, where the diffusion time is geodesic distance. Consider therefore the rescaled process
\begin{equation}
L^{(t)}_{s}=\frac{M_{n_{[s t]}}}{t}, \quad 0\leq s <\infty,\nn
\end{equation}
where $n_t$ is defined as above. We then have the following theorem:

\begin{theo} \label{thmlength2}
For an infinite causal triangulation generated by the growth process $\{M_n\}_{n\geq 1}$ with initial boundary of length $M_1 \to l\geq 0$ we have
\[
L^{(t)}_{s}=\frac{M_{n_{[s t]}}}{t} \codi L_s, \quad 0\leq s <\infty,
\]
as $t\to\infty$ in the sense of weak convergence on the functions space $D[0,\infty)$, where the continuous process is diffusive and solves the following It\^o's equation
\[
dL_s =2 ds +  \sqrt{2 L_s} dB_s \quad L_0=l,
\]
with $B_s$ being standard Brownian motion of variance one.
\end{theo}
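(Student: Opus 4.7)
The plan is to realize $L^{(t)}$ as a random time change of the growth-time process $M^{(n)}$ from Theorem~\ref{thmlength}, making precise the duality hinted at in Remark~\ref{rmktime}. Setting $n=t^{2}$, one has $M^{(t^{2})}_{u}=M_{[u t^{2}]}/t$, and with the discrete time change
\[
A^{(t)}_{s}:=\frac{n_{[st]}}{t^{2}},
\]
the identity $L^{(t)}_{s}=M^{(t^{2})}_{A^{(t)}_{s}}$ holds by construction, reducing the problem to a joint convergence plus a composition argument.

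The first task is to identify the limit of $A^{(t)}$. By Lemma~\ref{lm:defeq}, $n_{k+1}-n_{k}=M_{n_{k}}+M_{n_{k+1}}$, and telescoping gives
\[
A^{(t)}_{s}=\frac{1}{t}\sum_{k=0}^{[st]-1}\bigl(L^{(t)}_{k/t}+L^{(t)}_{(k+1)/t}\bigr),
\]
a Riemann sum whose candidate limit is $A_{s}=2\int_{0}^{s}L_{r}\,dr$. One then establishes joint tightness of $(M^{(t^{2})},A^{(t)},L^{(t)})$ in $D([0,\infty);\mathbb{R}^{3})$: tightness of $M^{(t^{2})}$ comes from Theorem~\ref{thmlength}, moment estimates on $M_{n_{k}}$ (available through Theorem~\ref{thmbijUICT} and the size-biased Galton--Watson interpretation) control $L^{(t)}$, and $A^{(t)}$ is Lipschitz in $s$ with local rate controlled by $\sup_{r\leq s}L^{(t)}_{r}$, so its tightness follows from that of $L^{(t)}$. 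For any subsequential limit, passing to the limit in the Riemann-sum identity yields $A_{s}=2\int_{0}^{s}L_{r}\,dr$; since $A^{(t)}$ is monotone with a strictly increasing continuous limit, continuity of composition in the Skorokhod topology gives $L^{(t)}_{s}=M^{(t^{2})}_{A^{(t)}_{s}}\Rightarrow M_{A_{s}}=:L_{s}$.

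The SDE for $L$ then falls out of the change-of-time formula. Using $dM_{u}=M_{u}^{-1}\,du+dB_{u}$ from Theorem~\ref{thmlength} together with $dA_{s}=2L_{s}\,ds$, one gets
\[
dL_{s}=\frac{1}{M_{A_{s}}}\,dA_{s}+dB_{A_{s}}=2\,ds+dB_{A_{s}}.
\]
The continuous martingale $N_{s}:=B_{A_{s}}$ has quadratic variation $\langle N\rangle_{s}=A_{s}=\int_{0}^{s}2L_{r}\,dr$, so by Dambis--Dubins--Schwarz $dN_{s}=\sqrt{2L_{s}}\,d\tilde B_{s}$ for a standard Brownian motion $\tilde B$, yielding the claimed equation.

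The main obstacle is the joint convergence step: $A^{(t)}$ depends on $L^{(t)}$, which in turn depends on $A^{(t)}$ through composition with $M^{(t^{2})}$, so every subsequential limit must be pinned down simultaneously. The clean resolution is uniqueness in law for the limiting system, which reduces to well-posedness of the martingale problem $\MP{2}{\sqrt{2x}}$ on $[0,\infty)$. Since the diffusion coefficient $x\mapsto\sqrt{2x}$ is H\"older of exponent $1/2$, the Yamada--Watanabe criterion delivers pathwise uniqueness and hence a unique limit point, forcing convergence of the whole sequence. As a by-product (and an alternative route avoiding the time change altogether), one can apply Theorem~\ref{convbook} directly to the Markov chain $k\mapsto M_{n_{k}}$, whose explicit one-step law is given in Theorem~\ref{thmbijUICT}; the infinitesimal mean and variance computed from the size-biased Galton--Watson transition are $2$ and $2x$, matching $b(x)=2$ and $\sigma^{2}(x)=2x$.
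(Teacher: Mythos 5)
Your concluding ``by-product'' is in fact the paper's actual proof: the paper uses Lemma~\ref{lm:probstrip} to identify the one-step law of the chain $k\mapsto M_{n_k}$ with the size-biased critical Galton--Watson transition \eqref{sootn}, i.e.\ with the boundary-length process of the UICT, and then simply invokes the already-established diffusion limit for that chain (Theorem 4.1 of \cite{Sisko2011}, which is exactly the ``apply Theorem~\ref{convbook} with $b(x)=2$, $\sigma^2(x)=2x$'' computation you indicate). So the short route you mention at the end is correct and is essentially the intended argument. Your primary route --- realizing $L^{(t)}_s=M^{(t^2)}_{A^{(t)}_s}$ with $A^{(t)}_s=n_{[st]}/t^2$, identifying $A_s=2\int_0^s L_r\,dr$ from the telescoping of Lemma~\ref{lm:defeq}, and composing --- is genuinely different: it derives Theorem~\ref{thmlength2} \emph{from} Theorem~\ref{thmlength}, whereas the paper proves Theorem~\ref{thmlength2} independently and only afterwards records the time-change relation as Theorem~\ref{thmlength12} (note your $A_s$ is precisely $T^{-1}_s$ there). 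What the time-change route buys is self-containedness (no appeal to \cite{Sisko2011}) and it proves Theorems~\ref{thmlength2} and~\ref{thmlength12} in one stroke; what it costs is a nontrivial joint-convergence argument.

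On that cost: as written, your tightness step is circular. You derive tightness of $A^{(t)}$ from that of $L^{(t)}$, and tightness of $L^{(t)}$ from unspecified ``moment estimates on $M_{n_k}$'' via the Galton--Watson interpretation --- but controlling the modulus of continuity of $L^{(t)}$ in $D[0,\infty)$ through the branching-process picture is essentially the content of the \cite{Sisko2011} result you were trying to avoid, and Yamada--Watanabe uniqueness for $\MP{2}{2x}$ only identifies subsequential limits \emph{after} tightness is secured. The circle can be broken (e.g.\ bound $\sup_{k\le [st]}L^{(t)}_{k/t}$ by $\sup_{u\le K}M^{(t^2)}_u$ on the event $\{A^{(t)}_s\le K\}$ and use the Lipschitz bound $A^{(t)}_s\le 2s\sup_{u\le A^{(t)}_s}M^{(t^2)}_u+O(t^{-1})$ to show $\mathbb{P}(A^{(t)}_s>K)$ is small uniformly in $t$ for large $K$), but this needs to be spelled out; alternatively, just promote your final remark to the main proof, which is what the paper does.
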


The proof is based on the following lemma:

\begin{lm}
\label{lm:probstrip}
We have
\[
\P\left(M_{n_{t+1}}=m+k | M_{n_{t}}=m\right) = \frac{m+k}{m} \frac{1}{2^{2m+k}} \binom{2m+k-1}{m-1},
\]
where $k=-m+1, -m+2, \dots.$
\end{lm}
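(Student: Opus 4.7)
The plan is to decompose the event $\{M_{n_{t+1}}=m+k\}$ into admissible trajectories of the growth chain on the block of moves between times $n_t$ and $n_{t+1}$, and then to exploit a telescoping of the transition probabilities together with a simple binomial count. By \eqref{e:df2} of Lemma~\ref{lm:defeq}, within such a block the chain performs exactly $m$ $(-)$-moves, and the $m$-th of them occurs at the very last step (otherwise $n_{t+1}$ would be smaller). Since each $(-)$-move decreases the boundary by one and each $(+)$-move increases it by one, reaching $m+k$ forces the block to contain $m+k$ $(+)$-moves; hence $n_{t+1}-n_t=2m+k$, and the first $2m+k-1$ moves are an arbitrary arrangement of $m-1$ $(-)$-moves and $m+k$ $(+)$-moves.

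The next step is to observe that the transition probabilities \eqref{yamb-tp1} can be written uniformly, on both the $(+)$ and the $(-)$ event, as $\tfrac{1}{2}\,M_{j+1}/M_j$. The probability of any fixed admissible trajectory of the block therefore telescopes to
\[
\prod_{j=n_t}^{n_{t+1}-1}\tfrac{1}{2}\,\frac{M_{j+1}}{M_j}\;=\;\frac{1}{2^{2m+k}}\cdot\frac{M_{n_{t+1}}}{M_{n_t}}\;=\;\frac{m+k}{m}\cdot\frac{1}{2^{2m+k}},
\]
which depends only on the endpoints $m$ and $m+k$ and on the length $2m+k$, not on the internal ordering of the moves.

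It then remains only to count the admissible orderings, of which there are $\binom{2m+k-1}{m-1}$ (the number of placements of $m-1$ $(-)$-moves among $2m+k-1$ slots); multiplying by the per-path probability yields the claimed formula. Two consistency properties come for free and should be checked in passing: the stopping time $n_{t+1}$ is not triggered before the end of the block, because the running count of $(-)$-moves never exceeds $m-1$ in the first $2m+k-1$ positions; and $M_j$ stays in $\mathbb N$ throughout, since $M_j=m+(\#\text{pluses so far})-(\#\text{minuses so far})\ge m-(m-1)=1$. The main subtlety I anticipate is precisely this bookkeeping — identifying that the final move is forced to be a $(-)$-move and verifying that positivity and non-premature stopping are automatic rather than extra constraints — after which the probability calculation is a one-line telescoping product and the counting is a single binomial coefficient.
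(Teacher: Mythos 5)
Your proposal is correct and follows essentially the same route as the paper: identify the block length as $2m+k$ with the final move forced to be a $(-)$-move, count the $\binom{2m+k-1}{m-1}$ orderings, and observe that every admissible trajectory has the same probability $\frac{m+k}{m}\,2^{-(2m+k)}$. The only difference is that you make explicit, via the telescoping product $\prod_j \tfrac{1}{2}M_{j+1}/M_j$ and the positivity/non-premature-stopping checks, what the paper asserts as "the last observation"; this is a welcome sharpening rather than a divergence.
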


\begin{proof} By Lemma \ref{lm:defeq}  we have that $n_{t+1}-n_t=2m+k$. Recall that any trajectory of the growth process needs a ``$-1$" step (i.e.\ $(-)$-move) to finish a slice, i.e.\ in Figure~\ref{yambar-fig1} the trajectory hits the point $d$ ``from above". For given $M_{n_{t+1}}=m+k$ and $M_{n_{t}}=m$ we have $\binom{2m+k-1}{m-1}$ possible trajectories of the growth process. The last observation is that any such trajectory, starting at the point $m$ and finishing at $m+k$ has the same probability $\frac{m+k}{m} \frac{1}{2^{2m+k}}$ (including the trajectories with reflection). This completes the proof.
\end{proof}

\begin{proof}[Proof of Theorem \ref{thmlength2}]
Using Lemma \ref{lm:probstrip} the transition probability from $M_{n_{t}}$ to $M_{n_{t+1}}$ is exactly equal the corresponding expression for the UICT, i.e.\  \eqref{sootn}. The proof then follows directly from Theorem 4.1 of \cite{Sisko2011}.
\end{proof}

\begin{theo} \label{thmlength12}
The diffusion process $M_u$, given in Theorem \ref{thmlength}, and the diffusion process $L_s$, given in Theorem \ref{thmlength2}, are related by a random time change $L_s=M_{T^{-1}_s}$ with
\[
T^{-1}_s=\inf\{ t: T_t>s\}
\]
and
\[
T_u = \int_0^u \frac{1}{2M_s} ds.
\]
\end{theo}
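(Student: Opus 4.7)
The plan is to take the SDE from Theorem~\ref{thmlength} for $M_u$, apply the random time change $s = T_u$, and verify that the resulting process satisfies exactly the SDE from Theorem~\ref{thmlength2} for $L_s$; uniqueness in law then identifies the two.

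First I would check that the time change is well-defined. The diffusion $M_u$ solves $dM_u = (1/M_u)\,du + dB_u$, which is a Bessel-type process with repelling drift at the origin; in particular $M_u > 0$ almost surely for all $u > 0$, so $T_u := \int_0^u \frac{1}{2 M_s}\,ds$ is well-defined, continuous, and strictly increasing. One also needs $T_u \to \infty$ as $u \to \infty$ so that $T^{-1}_s$ covers all of $[0,\infty)$; this follows from the fact that $M_u$ has diffusive growth $M_u = O(\sqrt{u\log\log u})$, which makes $\int_0^\infty \frac{ds}{M_s}$ diverge almost surely.

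Next I would carry out the time-change computation. Set $N_s := M_{T^{-1}_s}$. Since $T'(u) = 1/(2 M_u)$, one has $\frac{d}{ds} T^{-1}_s = 2 M_{T^{-1}_s} = 2 N_s$. Applying the standard time-change formula for It\^o integrals gives
\begin{equation}
dN_s = \frac{1}{M_{T^{-1}_s}} \cdot 2 N_s\, ds + dB_{T^{-1}_s} = 2\, ds + d\tilde B_s,\nn
\end{equation}
where $\tilde B_s := B_{T^{-1}_s}$. By Dambis--Dubins--Schwarz, $\tilde B_s$ is a continuous local martingale with quadratic variation $\langle \tilde B \rangle_s = T^{-1}_s$, so $d\langle \tilde B\rangle_s = 2 N_s\, ds$. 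Hence there exists a standard Brownian motion $W_s$ (on a possibly enlarged probability space) such that $d\tilde B_s = \sqrt{2 N_s}\, dW_s$, and therefore
\begin{equation}
dN_s = 2\, ds + \sqrt{2 N_s}\, dW_s, \quad N_0 = M_0 = l.\nn
\end{equation}

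Finally I would invoke uniqueness. This SDE is precisely the one characterizing $L_s$ in Theorem~\ref{thmlength2}; it is of squared-Bessel / CIR type and, although $x \mapsto \sqrt{2x}$ is not Lipschitz at the origin, it is $\tfrac{1}{2}$-H\"older, so pathwise uniqueness holds by the Yamada--Watanabe criterion (and non-negativity of $N_s$ is automatic since it equals $M_{T^{-1}_s} > 0$). Consequently $N_s$ and $L_s$ agree in law as processes, which gives the stated identity $L_s = M_{T^{-1}_s}$.

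The main technical obstacle is the justification of the time-change step and the associated Brownian representation: one must be careful that the integral defining $T_u$ is finite and diverges at $u=\infty$, that the filtration is changed appropriately under $T^{-1}$, and that Dambis--Dubins--Schwarz applies to produce a genuine Brownian motion $W$ driving $L_s$. The SDE manipulations themselves are routine, but keeping track of the time change rigorously on the level of semimartingales is the delicate point.
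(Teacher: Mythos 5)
Your proposal is correct and follows essentially the same route as the paper: a random time change of the diffusion $M_u$ with rate $g(x)=1/(2x)$, yielding a process that solves the martingale problem with drift $2$ and diffusion coefficient $2x$, identified with $L_s$ by uniqueness. The paper compresses this into a single citation of the general time-change theorem for martingale problems (Theorem~\ref{thmtimeapp}), whereas you unpack the same computation by hand via the It\^o time-change formula, the representation of the time-changed martingale as $\sqrt{2N_s}\,dW_s$, and Yamada--Watanabe uniqueness --- all of which is sound.
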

\begin{proof} The proof is a consequence of Theorem \ref{thmtimeapp}, a well-known theorem from stochastic calculus, presented in Appendix \ref{appB2} with $g(x)=1/(2x)$.
\end{proof}

\section{Discussion}\label{conclusion}

In this article we present a growth process which samples sections of uniform infinite causal triangulations (UICT). In particular, a triangulation is grown by adding a single triangle according to two different moves, denoted ``$(+)$" and ``$(-)$", with probability
\begin{equation}
    \mathbb P( (\pm)-\mbox{move} \mid l(T_n)=m ) = \frac{1}{2} \frac{m\pm 1}{m},\nn
\end{equation}
where $l(T_n)$ denotes the length of the boundary of the triangulation at step $n$ of the growth process. The $(\pm)$-moves are illustrated in Figure \ref{fig1}, the $(+)$-move increases the boundary length by one while the $(-)$-move decreases it by one.
This growth process can equivalently be described by a recurrent Markov chain $\{M_n\}_{n\geq0}$ for the boundary length of the triangulation $M_n=l(T_n)$ as noted in Remark \ref{yamb-th1}.

It is shown in Theorem \ref{yamb-th2} that the growth process constructs so-called almost causal triangulations which are causal triangulations with certain defects as shown in Figure \ref{f3}. Defining the stoping times $n_t$ when the growth process completes the strip $S^1\times [t,t+1]$ at ``height'' $t$, it is shown in Theorem \ref{thmbijUICT} that there is a bijection between the almost causal triangulations created by the growth process and (regular) causal triangulations which furthermore preserves the probability of the corresponding triangulation. Hence, the growth process $\{M_n\}_{n\geq0}$ with $M_0=m_0$ when stopped at $n_t$ indeed samples sections  of UICT of height $t$ with initial boundary equal to $m_0$ and final boundary arbitrary.

Using the growth process, as described above, we show in Proposition \ref{theo:nt} that for almost every trajectory of the growth process one can find two constants $\gamma_1,\gamma_2>0$ such that $  \gamma_1\, t^2/(\log^2 t) \le n_t \le \gamma_2 t^2 \log^2 t $. This implies that the fractal dimension is given by $d_h=2$ almost surely as stated in Theorem \ref{theo:fractal}. This derivation is dual to previous results in \cite{Durhuus:2009sm} which employs the relation to branching processes: In the branching process geodesic distance is fixed while area growth is estimated, whereas in the growth process area (which is equal to growth time) is fixed and geodesic distance is estimated.

\begin{table}[t]
\begin{center}
\begin{tabular}{| l ||c|c|c|}
\hline
& boundary length & area & geodesic distance \\\hline\hline
Growth process & $M_u$ & $0\leq u< \infty$ & $T_u = \int_0^u \frac{1}{2 M_v} dv$ \\\hline
Branching process& $L_s$ & $A_s = \int_0^s 2L_v dv$ & $0\leq s<\infty$ \\\hline
\end{tabular}
\end{center}
\caption{Duality of growth process and branching process.\label{Tab1}}
\label{default}
\end{table}%

In Theorem \ref{thmlength} we discuss convergence of the rescaled Markov chain $\{M_n\}_{n\geq0}$ to a diffusion process given by the following It\^o's equation
\[
dM_u=\frac{1}{M_u}du + dB_u.
\]
It is then shown in Theorem \ref{thmlength2} and \ref{thmlength12} that by a random time change $L_s=M_{T^{-1}_s}$ one obtains a diffusion process with It\^o's equation
\[
dL_s =2 ds +  \sqrt{2 L_s} dB_s,
\]
which describes the behaviour of the boundary length of completed slices and precisely agrees with the corresponding results obtained from the branching process picture \cite{Sisko2011}. While Theorem \ref{thmlength} follows rather straightforwardly from the properties of the Markov chain, Theorem \ref{thmlength2} together with Theorem \ref{thmlength12} result in a physically interesting duality relation which is illustrated in Table \ref{Tab1}: In the growth process growth time $u$ which is equal to area is fixed and geodesic distance $T_u = \int_0^u \frac{1}{2 M_v} dv$ is random, whereas, in the branching process picture geodesic distance $s$ is fixed while area $A_s = \int_0^s 2L_v dv$ is random. This duality relation also clarifies the so-called peeling procedure as introduced by Watabiki \cite{Watabiki:1993ym} in the context of Euclidean quantum gravity and derived in \cite{CDTmatrix2} for CDT.

As a continuation of the presented work it would be interesting to extend the work of Angel \cite{Angel:2002ta} and investigate the convergence of the boundary length process coming from the growth process of DT to a L\'evy process. Furthermore, one could extend the here developed techniques to multi-critical DT \cite{Watabiki:1993ym,Gubser:1993vx} as well as to a recently introduced model of multi-critical CDT \cite{Ambjorn:2012zx,Atkin:2012yt}. The convergence of the boundary length process should hopefully shed light on the failure of the peeling procedure in the context of multi-critical DT \cite{Watabiki:1993ym,Gubser:1993vx}.

\subsection*{Acknowledgments}
The authors would like to thank Richard Gill for fruitful discussions. The work of V.S.\ was supported by FAPERJ (grants E-26/170.008/2008 and E-26/110.982/2008) and CNPq (grants 471891/2006-1, 309397/2008-1 and 471946/2008-7). The work of A.Y.\ was partly supported by CNPq 308510/2010-0. S.Z.\ would like to thank the Department of Statistics at S\~ao Paulo University (IME-USP) as well as the Institute for Pure and Applied Mathematics (IMPA) for kind hospitality. Financial support of FAPESP under project 2010/05891-2, as well as STFC and EPSRC is kindly acknowledged.

\appendix

\section{Proofs of basic Lemmas} \label{appA1}

\subsection{Proof of Lemma \ref{lm:sn21}}

Let $\F_n=\sigma(M_0,M_1,\dots,M_n)$. Recall that
$\xi_n=M_{n+1}-M_n$.
We thus have $M_n \in \F_n$ and $\xi_n \in \F_{n+1}$.
On $\{M_n \geq 1\}$ one has
\begin{equation}
\label{e:lm:sn2:1}
\E(\xi_n|\F_n)
=1\cdot \frac{1}{2}\Bigl(1+\frac{1}{M_n}\Bigr)-1\cdot
\frac{1}{2}\Bigl(1-\frac{1}{M_n}\Bigr)=\frac{1}{M_n}.
\end{equation}
Note also that
\begin{equation}
\label{e:lm:sn2:3}
\xi_n^2=1.
\end{equation}

Consider $X_n=M_n^2-3n$.
Let us prove that $X_n$ is a martingale adapted to $\F_n$.
Evidently $X_n \in \F_n$ and $\E |X_n|< \infty$.  Therefore, we only
need to check that $\E(X_{n+1}|\F_n)=X_n$.
Using~\eqref{e:lm:sn2:1}--\eqref{e:lm:sn2:3} we have
\begin{equation}
\label{e:lm:sn2:4}
\begin{split}
&\E(X_{n+1}-X_n|\F_n)
=\E(M_{n+1}^2-M_n^2-3|\F_n)\\
&\quad=\E[(\xi_n+M_n)^2-M_n^2-3|\F_n]
=\E[\xi_n^2|\F_n]
+2M_n\E[\xi_n|\F_n]-3
=0.
\end{split} \nn
\end{equation}
Thus one gets
\begin{eqnarray}
\label{e:lm:sn2:5}
\E((X_{n+1}-X_n)^2|\F_n)
&=&\E[(\xi_n^2 +2M_n\xi_n-3)^2|\F_n] \nn\\
&=& 4\E[(M_n\xi_n-1)^2|\F_n]\nn\\
&=& 4(M^2_n+1-2)=4(M_n^2-1),\nn
\end{eqnarray}
and therefore,
\begin{equation}
\label{e:lm:sn2:6}
\E(X_{n+1}-X_n)^2
=4\E M_n^2 -4
=4\E (X_n+3n)-4
=4\E (X_0) +12n-4.
\end{equation}
With any sequence of positive numbers $\{a_m\}$ consider
\[
B_n=\sum_{m=1}^n \frac{X_m-X_{m-1}}{a_m}.
\]
Using the fact that $X_n$ is a martingale, it is easy to check that $B_n$ is
also a martingale.
One has
\begin{eqnarray}
\E B_n^2&=&\E\biggl(\E\biggl(\biggl(\sum_{m=1}^n
 \frac{X_m-X_{m-1}}{a_m}\biggr)^2\biggl|\biggr.\F_{n-1}\biggr)\biggr)\nn\\
 &=&\E\biggl(\sum_{m=1}^{n-1} \frac{X_m-X_{m-1}}{a_m}\biggr)^2 + \E\biggl(\frac{X_n-X_{n-1}}{a_n}\biggr)^2
\nn\\
&&\ \qquad + 2 \E\biggl( \sum_{m=1}^{n-1} \frac{X_m-X_{m-1}}{a_m}
\E\biggl(\frac{X_n-X_{n-1}}{a_n}\biggl|\biggr.\F_{n-1}\biggr)\biggr) \nn\\
& =&\E\biggl(\sum_{m=1}^{n-1} \frac{X_m-X_{m-1}}{a_m}\biggr)^2
+\E\biggl(\frac{X_n-X_{n-1}}{a_n}\biggr)^2
=\dotsb\nn\\
&=&\sum_{m=1}^n \frac{\E (X_m-X_{m-1})^2}{a_m^2}.\nn
\end{eqnarray}
From \eqref{e:lm:sn2:6}, we get that if $a_m=m\log m$, then
there exists a constant $C>0$ such that $\E B_n^2<C<\infty$.
Therefore, using the $L^p$ convergence theorem (see e.g.\ Theorem (4.5) from Chapter 4 of \cite{Durrett1996}), 
one has that $B_n$ converges a.s. Using Kroneker's Lemma (see e.g.\ Lemma (8.5) from Chapter 1 of \cite{Durrett1996}), 
we see that $\frac{X_n}{n \log n}\to 0$ a.s.
and recalling the definition of $X_n$, one gets Lemma
\ref{lm:sn21}.

\subsection{Proof of Lemma \ref{lm:sn22}}

This proof proceeds using a similar strategy as the previous proof.
%
%
Consider
\[
X_n=M_n-\sum_{i=1}^n \frac{1}{M_i}.
\]
From~\eqref{e:lm:sn2:1} it follows that
$X_n$ is a martingale adapted to $\F_n$.
Using the fact that $|\xi_n|=1$ and $M_n\ge 1$ for any $n$,
we have
\begin{equation}
\label{e:lm:sn2:5b}
\E(X_{n+1}-X_n)^2
=\E\Bigl(\xi_n-\frac{1}{M_{n+1}}\Bigr)^2
\le 4.
\end{equation}
Consider further
\[
B_n=\sum_{m=1}^n \frac{X_m-X_{m-1}}{a_m}.
\]
Using the fact that $X_n$ is a martingale, one has again that $B_n$ is
also a martingale and 
we have
\begin{eqnarray}
\E B_n^2&=& 
\sum_{m=1}^n \frac{\E (X_m-X_{m-1})^2}{a_m^2}.\nonumber
\end{eqnarray}
Using~\eqref{e:lm:sn2:5b}, we get that if $a_m=\sqrt{m}\log m$, then
$\sup_n\E B_n^2<\infty$.
Using the $L^p$ convergence theorem, 
we observe that $B_n$ converges a.s. Finally,
using Kroneker's Lemma as before, we see that $\frac{X_n}{\sqrt{n} \log n}\to 0$ a.s.
Recalling the definition of $X_n$, we get  Lemma \ref{lm:sn22}.







\section{Convergence of Markov chains to diffusion processes}\label{appB2}

To prove Theorem \ref{thmlength} we need a little background on stochastic differential equations and convergence to diffusion. The following definition and theorem can for instance be found in \cite{Durrettstochasticbook} Chapter 5 and 8, where the latter is a rather good introduction to the topic which itself is based on \cite{Billingsley1999,Either1986,Strook1979}.

\begin{df} We say that $X_t$ is a solution to the martingale problem for $b$ and $\sigma^2$, or simply $X$ solves $\MP{b}{\sigma^2}$ if
\[
X_t -\int_0^t b(X_s) ds\quad\text{and}\quad  X_t^2 -\int_0^t \sigma(X_s)^2 ds
\]
are local martingales. Further, we say that the martingale problem is well-possed if there is uniqueness in distribution and no explosion. \label{MPdef}
\end{df}

Let us now consider a Markov chain $Y^{(h)}_{mh}$, $m\geq 0$, taking values in a set $\mathcal{X}_h\subset \mathbb{R}$ and having transition probabilities
\[
p^h(x,A):=\Pc{Y^{(h)}_{(m+1)h}\in A}{Y^{(h)}_{mh}=x}, \quad x\in \mathcal{X}_h,\quad A\subseteq\mathbb{R}.
\]
Further, set $X_t^{(h)}=Y^{(h)}_{h [t/h]}$ and define
\begin{eqnarray}
\sigma^2_h(x) &=&h^{-1} \int_{|y-x|\leq1}(y-x)^2 p^h(x,dy) \nonumber\\
b_h(x) &=& h^{-1}\int_{|y-x|\leq1}(y-x) p^h(x,dy) \nonumber\\
\Delta_h^\epsilon (x) &=& h^{-1}\, p^h(x,B(x,\epsilon)^c) \nonumber
\end{eqnarray}
with $B(x,\epsilon)=\{y:|y-x|<\epsilon\}$.

The following Theorem (see e.g.\ \cite{Durrettstochasticbook}, Theorem 8.7.1) proves convergence of the Markov chain to a limiting diffusion:
\begin{theo}\label{convbook}
Suppose that $b$ and $\sigma$ are continuous functions for which the martingale problem is well-possed and for $R<\infty$ and $\epsilon>0$
\begin{enumerate}
\item[(1)] $\lim_{h\to0}\sup_{|x|\leq R} |\sigma^2_h(x)-\sigma^2(x) |=0$
\item[(2)] $\lim_{h\to0}\sup_{|x|\leq R} |b_h(x)-b(x) |=0$
\item[(3)] $\lim_{h\to0}\sup_{|x|\leq R} \Delta_h^\epsilon (x)=0$
\end{enumerate}
If $X^{(h)}_0\to x$ then we have $X^{(h)}_t \codi X_t$, in the sense of weak convergence on the functions space $D[0,\infty)$, where the continuous process $X_t$ is diffusive and solves the following It\^o's equation
\[
dX_t=b(X_t)dt + \sigma(X_t) dB_t, \quad X_0=x.
\]
\end{theo}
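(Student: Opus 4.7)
I would use the martingale-problem characterization of diffusions: since $\MP{b}{\sigma^2}$ is assumed well-posed, it suffices to prove (i) tightness of $\{X^{(h)}\}_{h>0}$ in $D[0,\infty)$, and (ii) that every subsequential weak limit solves $\MP{b}{\sigma^2}$. Combined with $X^{(h)}_0\to x$, this identifies the unique limit; the standard equivalence between a solution of a well-posed one-dimensional martingale problem and a weak solution of the corresponding SDE then delivers $X^{(h)}_t\codi X_t$ with $dX_t=b(X_t)dt+\sigma(X_t)dB_t$.

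For tightness I would localize on $\{|x|\leq R\}$ by a stopping-time truncation and invoke Aldous's criterion. Conditions (1)--(2) control the first two local characteristics uniformly on compacts, yielding an estimate of the form
\[
\E\bigl[(X^{(h)}_{t+\delta}-X^{(h)}_t)^2\,\mathbf{1}_{\{|X^{(h)}_s|\leq R,\ s\leq t+\delta\}}\bigr]\leq C(R)\delta+o(1),\qquad h\to 0,
\]
which handles the modulus-of-continuity part of Aldous. Condition (3) removes any ``large jump'' contribution and, together with $\sigma^2_h$ being bounded on compacts, forces $\sup_{s\leq t}|X^{(h)}_s-X^{(h)}_{s-}|\to 0$ in probability, guaranteeing that the candidate limit is continuous. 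A standard non-explosion argument, available because the well-posed limiting martingale problem is itself non-exploding, lets me remove the localization in $R$.

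For identification of the limit, take $f\in C^2_c(\R)$ and consider the discrete Doob compensator
\[
N^{(h),f}_t=f(X^{(h)}_t)-\sum_{k=0}^{[t/h]-1}\E\bigl[f(X^{(h)}_{(k+1)h})-f(X^{(h)}_{kh})\bigm|\F^{(h)}_{kh}\bigr],
\]
which is a martingale in the natural filtration of $X^{(h)}$. A second-order Taylor expansion of $f$ and the splitting of $p^h(x,dy)$ into its part on $|y-x|\leq 1$ (producing $h\,b_h$ and $h\,\sigma^2_h$) and on $|y-x|>1$ (controlled by $h\,\Delta_h^1$ together with $2\|f\|_\infty$) show that this compensator equals $\int_0^t (b_h f'+\tfrac12\sigma^2_h f'')(X^{(h)}_s)\,ds+o(1)$, uniformly on compacts. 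Along a weakly convergent subsequence $X^{(h_n)}\codi X$, conditions (1)--(2) together with continuity of $b$ and $\sigma^2$ let me replace $b_{h_n},\sigma^2_{h_n}$ by $b,\sigma^2$, so the compensator converges to $\int_0^t (bf'+\tfrac12\sigma^2 f'')(X_s)\,ds$; the martingale property passes to the limit by uniform integrability supplied by the compact support of $f$, showing that $X$ solves $\MP{b}{\sigma^2}$.

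The main obstacle will be the uniform-in-$h$ control of the Taylor remainder when exchanging $h\to 0$ with the discrete sum. Condition (3) is precisely what discards the large-jump contribution, conditions (1)--(2) with uniform continuity of $b,\sigma^2$ on compacts handle the drift and diffusion terms, and the compact support of $f$ supplies the dominated-convergence step; the third-order Taylor remainder is absorbed by a factor $\sup_{|y-x|\leq \epsilon}|y-x|\cdot\sigma_h^2(x)$ that vanishes as $\epsilon\downarrow 0$. Beyond this, the argument is a textbook application of the Stroock--Varadhan machinery as presented in Chapter 8 of Durrett's book.
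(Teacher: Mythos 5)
The paper itself offers no proof of this theorem: it is quoted directly from Durrett's \emph{Stochastic Calculus} (Theorem 8.7.1), and your sketch is a faithful outline of exactly the Stroock--Varadhan argument given there --- tightness via Aldous's criterion with condition (3) killing the jumps, identification of subsequential limits as solutions of $\MP{b}{\sigma^2}$ through the discrete Doob compensator and a second-order Taylor expansion, and uniqueness from well-posedness. Your plan is correct and takes essentially the same route as the cited source.
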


The following is a well-known theorem from stochastic calculus (see e.g. \cite{Durrettstochasticbook}, Theorem 5.6.1) regarding random time changes of a stochastic process:

\begin{theo}\label{thmtimeapp}
Let $X_u$ be a solution of the martingale problem $\MP{b}{\sigma^2}$ for $u\in [0,\infty)$, let $g$ be a positive function and suppose that for all $u\in [0,\infty)$
\[
\tau_u = \int_0^u g(X_s) ds<\infty.
\]
Define the inverse of $\tau_u$ by $\tau^{-1}_s=\inf\{ t: \tau_t>s\}$ and let $Y_s=X_{\tau^{-1}_s}$, then $Y_s$ is a solution of $\MP{b/g}{\sigma^2/g}$.
\end{theo}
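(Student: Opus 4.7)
The plan is to verify directly that the two processes
\[
N^{(1)}_s := Y_s - \int_0^s \frac{b(Y_r)}{g(Y_r)}\,dr, \qquad N^{(2)}_s := Y_s^2 - \int_0^s \frac{\sigma^2(Y_r)}{g(Y_r)}\,dr
\]
are local martingales with respect to the time-changed filtration $\mathcal{G}_s := \mathcal{F}_{\tau^{-1}_s}$, where $\{\mathcal{F}_u\}$ denotes the filtration to which $X$ and its martingale-problem martingales
\[
M^{(1)}_u := X_u - \int_0^u b(X_v)\,dv, \qquad M^{(2)}_u := X_u^2 - \int_0^u \sigma^2(X_v)\,dv
\]
are adapted. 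Two preliminary facts need to be checked first. Since $g>0$, the map $u\mapsto \tau_u$ is strictly increasing and continuous, so $\tau^{-1}_s$ is well-defined, strictly increasing and continuous in $s$. For each $s$ it is an $\mathcal{F}_u$-stopping time, since $\{\tau^{-1}_s\le u\}=\{\tau_u\ge s\}\in\mathcal{F}_u$. Consequently $\{\mathcal{G}_s\}_{s\ge0}$ is a legitimate filtration and each $\tau^{-1}_s$ is a stopping time relative to it.

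Next, I would perform the change of variables $v=\tau^{-1}_r$, i.e.\ $r=\tau_v$, $dr=g(X_v)\,dv$. This gives
\[
\int_0^{\tau^{-1}_s} b(X_v)\,dv = \int_0^s \frac{b(Y_r)}{g(Y_r)}\,dr, \qquad \int_0^{\tau^{-1}_s} \sigma^2(X_v)\,dv = \int_0^s \frac{\sigma^2(Y_r)}{g(Y_r)}\,dr,
\]
so that $N^{(1)}_s = M^{(1)}_{\tau^{-1}_s}$ and $N^{(2)}_s = M^{(2)}_{\tau^{-1}_s}$. The problem is therefore reduced to showing that a local martingale composed with a continuous, increasing family of finite stopping times is itself a local martingale in the induced filtration.

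For this I would invoke the standard optional stopping argument for local martingales. Choose a localizing sequence $\{T_n\}$ with $T_n\uparrow\infty$ such that $M^{(i),T_n}$ is a uniformly integrable martingale ($i=1,2$). Define the $\mathcal{G}_s$-stopping times $S_n := \tau(T_n)$; these are finite and increase to $\infty$ because $\tau_u<\infty$ for every $u$ and $\tau_u\to\infty$ as $u\to\infty$ (otherwise the inverse would not be defined on all of $[0,\infty)$; this is either assumed or can be handled by a further localization at the explosion time of $\tau$). Apply the optional sampling theorem to $M^{(i),T_n}$ at the bounded stopping times $\tau^{-1}_{s\wedge S_n}\le T_n$ to obtain the martingale property
\[
\E\!\left[N^{(i),S_n}_t \mid \mathcal{G}_s\right] = N^{(i),S_n}_s, \qquad 0\le s\le t,
\]
for $i=1,2$. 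Letting $n\to\infty$ proves that $N^{(1)}$ and $N^{(2)}$ are $\mathcal{G}$-local martingales, which is exactly the statement that $Y$ solves $\MP{b/g}{\sigma^2/g}$.

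The main obstacle is the careful handling of localization and measurability at the interface between the two time scales: ensuring that the stopping times $S_n$ really do reduce $N^{(i)}$, that optional sampling applies (which requires the bound $\tau^{-1}_{s\wedge S_n}\le T_n$ and integrability of the stopped martingales), and that the change of variables in the Lebesgue integrals is justified pathwise from the absolute continuity of $\tau_u$ with density $g(X_u)$. The hypothesis $\tau_u<\infty$ for all $u$ is crucial here, as it guarantees $\tau^{-1}_s$ stays bounded on bounded time intervals and that the localization transfers cleanly between the two filtrations.
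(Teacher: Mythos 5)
Your argument is correct, but note that the paper does not actually prove Theorem \ref{thmtimeapp}: it is quoted as a well-known fact with a pointer to Durrett's \emph{Stochastic Calculus: A Practical Introduction}, Theorem 5.6.1, and the proof given there is essentially the one you outline. Your two reductions are the right ones: the pathwise substitution $r=\tau_v$, $dr=g(X_v)\,dv$ (valid because $\tau$ is a.s.\ strictly increasing, continuous and absolutely continuous, so that $\tau_{\tau^{-1}_s}=s$), which identifies $N^{(i)}_s$ with $M^{(i)}_{\tau^{-1}_s}$; and the fact that composing a continuous local martingale with the continuous increasing family of $\mathcal{F}$-stopping times $\tau^{-1}_s$ yields a local martingale for $\mathcal{G}_s=\mathcal{F}_{\tau^{-1}_s}$. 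Your localization is also set up correctly: since the processes are continuous one may take $T_n$ so that $M^{(i),T_n}$ is bounded, hence uniformly integrable, and then $\tau^{-1}_{t\wedge S_n}=\tau^{-1}_t\wedge T_n$ with $S_n=\tau_{T_n}$ makes optional sampling at the ordered stopping times $\tau^{-1}_s\le\tau^{-1}_t$ legitimate. The one point that deserves to be stated as a hypothesis rather than an aside is the one you flag yourself: for $Y_s$ to be defined for all $s\in[0,\infty)$ one needs $\tau_u\to\infty$ as $u\to\infty$, which is not literally implied by $\tau_u<\infty$ for each $u$. In the paper's application ($g(x)=1/(2x)$ with $X$ the Bessel-type diffusion of Theorem \ref{thmlength}) this holds, so nothing is lost, but as a standalone statement your proof correctly identifies the missing assumption.
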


\newpage
\addcontentsline{toc}{section}{References}
\providecommand{\href}[2]{#2}\begingroup\raggedright\endgroup

\end{document}